 \providecommand{\F}{\mathbb{F}}
\newtheorem{lemma}{Lemma}[section]
\newtheorem{theorem}[lemma]{Theorem}
\newtheorem{prop}[lemma]{Proposition}
\newtheorem{cor}[lemma]{Corollary}
\newtheorem{defn}{Definition}
\newtheorem{example}{Example}
\newcommand{\eps}{\varepsilon}
\renewcommand{\epsilon}{\varepsilon}
\renewcommand{\le}{\leqslant}
\renewcommand{\ge}{\geqslant}
\newcommand{\vnote}[1]{}
\newcommand{\cpnote}[1]{}
\def\ZZ{\mathbb{Z}}
\def\PP{\mathbb{P}}
\def \mL {\mathcal{L}}
\def \mP {\mathcal{P}}
\def \mT {\mathcal{T}}
\def\Pin{{P_{\infty}}}
\def\Supp{{\rm Supp}}
\def\wt{{\rm wt}}
\def\RM{{\mathcal RM}}
\newcommand{\Ga}{\alpha}
\newcommand{\Gb}{\beta}
\newcommand{\Gg}{\gamma}     
\newcommand{\Ge}{\epsilon}
\def\RM{\mathsf{RM}}
\def \bc {{\bf c}}
\def \bu {{\bf u}}
\def \bv {{\bf v}}
\def \bo {{\bf 0}}
\def\Supp {{\rm Supp }}
\date{}
\begin{document}

\title{\bf Efficiently list-decodable punctured Reed-Muller codes}

\author{Venkatesan Guruswami\thanks{Computer Science Department, Carnegie Mellon University, Pittsburgh, USA. {\tt guruswami@cmu.edu}. Research supported in part by US National Science Foundation grants CCF-1422045 and CCF-1563742.} \and Lingfei Jin\thanks{School of Computer Science, Shanghai Key Laboratory of Intelligent Information Processing, Fudan University, Shanghai 200433, China. {\tt lfjin@fudan.edu.cn}} \and Chaoping Xing\thanks{Division of Mathematical Sciences, School of Physical \&  Mathematical Sciences, Nanyang Technological University, Singapore. {\tt xingcp@ntu.edu.sg}.}}

\maketitle
\thispagestyle{empty}

\begin{abstract}
The Reed-Muller (RM) code encoding $n$-variate degree-$d$ polynomials
over $\F_q$ for $d < q$, with its evaluation on $\F_q^n$, has relative distance $1-d/q$ and can be list
decoded from a $1-O(\sqrt{d/q})$ fraction of errors. In this work, for
$d \ll q$, we give a length-efficient puncturing of such codes which
(almost) retains the distance and list decodability properties of the
Reed-Muller code, but has much better rate.  Specificially, when $q
=\Omega( d^2/\eps^2)$, we given an explicit rate $\Omega\left(\frac{\Ge}{d!}\right)$
puncturing of Reed-Muller codes which have relative distance at least
$(1-\eps)$ and efficient list decoding up to $(1-\sqrt{\eps})$ error
fraction. This almost matches the performance of random puncturings
which work with the weaker field size requirement $q= \Omega(
d/\eps^2)$.  We can also improve the field size requirement to the
optimal (up to constant factors) $q =\Omega( d/\eps)$, at the expense of
a worse list decoding radius of $1-\eps^{1/3}$ and rate
$\Omega\left(\frac{\Ge^2}{d!}\right)$.

The first of the above trade-offs is obtained by substituting for the
variables functions with carefully chosen pole orders from an
algebraic function field; this leads to a puncturing for which the RM
code is a subcode of a certain algebraic-geometric code (which is
known to be efficiently list decodable). The second trade-off is
obtained by concatenating this construction with a Reed-Solomon based
multiplication friendly pair, and using the list recovery property of
algebraic-geometric codes.

\end{abstract}

\section{Introduction}

The Reed-Muller code is one of the oldest and most widely studied code
families, with many fascinating properties.  For a finite field $\F_q$
with $q$ elements, and integers $n,d$, the Reed-Muller  code
$\RM_q(n,d)$ encodes polynomials in $n$-variables of total degree at most $d$
by their evaluations at all points in $\F_q^n$. In the case where $n=1$, the Reed-Muller  code
$\RM_q(1,d)$ is in fact the extended Reed-Solomon code.  In this paper we focus
on the regime where $d < q$ --- this case is of particular interest in
complexity theory, where the local checkability/decodabiliy properties
of the code (thanks to its restrictions to lines/curves being
Reed-Solomon codes) have found many applications. In this regime, the
Reed-Muller code $\RM_q(n,d)$ is an $\F_q$-linear code of dimension ${{n+d}\choose d}$,
block length $q^n$, and relative distance $1-d/q$. Thus, when $d \ll
q$, the distance property is excellent and any two codewords differ in
most of the positions. However, for large $n$, the rate of the code is
very small, as $\approx n^d$ symbols are encoded into $q^n$ codeword
symbols.

The poor rate can be ameliorated, without significant compromise in
the distance properties, by  {\em puncturing} of the code. This
means that message polynomials are encoded by their values at a
 subset $S \subseteq \F_q^n$ of evaluation
points. Standard probabilistic arguments show that if one punctures a $q$-ary $[N,K,D]$  code at $N'$ random positions to obtain an $[N',K,D']$ code, then $D'\approx D(N'-K)/(N-K)$. Thus, if we puncture a Reed-Muller code with $D/N=1-\Ge/2$, down to rate $\Omega(\Ge)$, i.e., $K/N'=\Omega(\Ge)$, then $D'/N' \ge 1-\eps$ since $K\ll N$.
Note
that this makes the rate $\Omega(\eps)$ which is optimal up to
constant factors for a relative distance of $1-O(\eps)$. Such a subset $S$
used for puncturing is called a ``dense hitting set'' for
degree-$d$ polynomials (which means that any non-zero polynomial
vanishes on at most an $\eps$ fraction of $S$). Explicit constructions
of hitting sets (and their stronger variant, pseudorandom generators)
for low-degree polynomials have been well-studied in the literature~\cite{KS01,Bo05,BV,Lov,Vio,DS11,Lu12}, with recent works achieving both field size and seed length optimal up to constant factors~\cite{GX14,B12}.

In other words, when $q =\Omega(d/\eps)$, we know explicit puncturings
of $\RM_q(n,d)$ to length $\text{poly}(n^d/\eps)$ that have relative
distance $1-\eps$. The motivation in this work is to obtain such
length-efficient puncturings of an RM code, which in addition have
{\em efficient error-correction algorithms}. Our goal is to correct a
constant fraction of worst-case errors, and in fact to \emph{list decode} an error fraction approaching $1$ for small $\eps$.
Our main result is stated below (we achieve two incomparable trade-offs):

\begin{theorem}
\label{thm:main-intro}
[Informal statement]
Let $\eps > 0$ be a small constant.
There are explicit puncturings of $\RM_q(n,d)$, constructible in deterministic $\mathrm{poly}(n^d/\eps)$ time, that have relative distance at least $(1-\eps)$ along with
the following list-decodability guarantees:
\begin{enumerate}
\item[{\rm (i)}] When $q = \Omega(d^2/\eps^2)$, the code has rate $\Omega\left(\frac{\Ge}{d!}\right)$ and can be efficiently list decoded from $(1-\sqrt{\eps})$ fraction of errors;
\item[{\rm (ii)}] When $q =\Omega(d/\eps)$, the code has rate $\Omega\left(\frac{\Ge^2}{d!}\right)$ and can be efficiently list decoded from $(1-\sqrt[3]{\eps})$ fraction of errors.
\end{enumerate}
In both cases the output list size is $O(1/\eps)$.
\end{theorem}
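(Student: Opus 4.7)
The plan is to realize the punctured Reed--Muller code as a subcode of an algebraic-geometric (AG) code, so that we inherit both its distance and its efficient list-decoding algorithm. Fix a rational place $P_\infty$ in a function field $F/\F_q$ of moderate genus with many rational places (e.g., a level of the Garcia--Stichtenoth tower). Choose $n$ functions $f_1,\dots,f_n \in F$ whose divisor of poles is supported only at $P_\infty$, with pole orders $\nu_1,\dots,\nu_n$ arranged so that the monomial images $f_1^{e_1}\cdots f_n^{e_n}$, as $(e_1,\dots,e_n)$ ranges over exponent vectors with $\sum e_i\le d$, all have distinct pole orders at $P_\infty$. The substitution $x_i \mapsto f_i$ then embeds the space of $n$-variate polynomials of degree $\le d$ injectively into the Riemann--Roch space $\cL(\lambda P_\infty)$, where $\lambda$ is the maximum pole order attained. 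Evaluating at a set $\mP$ of $N$ rational places other than $P_\infty$ exhibits the resulting punctured RM code as a subcode of the AG code $C_\cL(\lambda P_\infty,\mP)$. Since the AG code has designed distance at least $N-\lambda$, arranging $\lambda \le \eps N$ forces the RM subcode to have relative distance $\ge 1-\eps$; a Guruswami--Sudan style list decoder for the AG code then recovers up to the Johnson radius $1-\sqrt{\lambda/N}\ge 1-\sqrt{\eps}$, with list size $O(1/\eps)$, which yields part~(i).

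For part~(ii) the aim is to shrink the alphabet from $q=\Omega(d^2/\eps^2)$ down to the optimal $q'=\Omega(d/\eps)$. The idea is to concatenate the construction from~(i) with a short Reed--Solomon based multiplication friendly pair (MFP): each outer big-field symbol, a value of $P(f_1,\dots,f_n)$, is replaced by its image under an $\F_{q'}$-linear map onto a Reed--Solomon block, where the MFP property guarantees that the degree-$d$ polynomial product structure the AG decoder relies on is faithfully encoded and can be recovered locally. Decoding proceeds in two stages: a Reed--Solomon list decoder on each inner block produces a short list of candidate big-field symbols at every outer position, which are fed into the list-recovery algorithm for the outer AG code to output the final list. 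Accounting for the combined radii yields the $1-\eps^{1/3}$ bound, while the extra inner penalty on rate produces $\Omega(\eps^2/d!)$.

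The main obstacle will be the explicit choice of the pole orders $\nu_1,\dots,\nu_n$ and the supporting function field. One must ensure simultaneously that (a) the $\nu_i$ lie in the Weierstrass semigroup of $P_\infty$ so that functions $f_i$ with precisely those pole orders exist, (b) the weighted sums $\sum_i e_i\nu_i$ are distinct across all exponent vectors with $\sum_i e_i\le d$, keeping the evaluation map injective and thereby preserving dimension $K=\binom{n+d}{d}$, and (c) the maximum weighted sum $\lambda$ is kept down to $O(d!\cdot K/\eps)$, which is precisely what forces the $d!$ loss in the rate. Balancing these three constraints against the genus of the function field (which eats into the available pole-order budget) is the technical heart of the argument. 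Once such $f_i$ are exhibited, the rest of the proof is a bookkeeping combination of the Singleton-type distance bound and efficient list-decoding/list-recovery algorithms for AG codes with the MFP construction used for part~(ii).
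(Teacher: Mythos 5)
Your proposal matches the paper's approach in all essentials: part (i) embeds the degree-$\le d$ polynomials into a Riemann--Roch space $\cL(\lambda P_\infty)$ of a Garcia--Stichtenoth function field by substituting $x_i \mapsto f_i$ with pole orders chosen so that all monomial images up to degree $d$ have distinct pole orders at $P_\infty$ --- exactly the Sidon-sequence condition the paper formalizes in Section 3 (you describe the condition without naming it), with your constraints (a)--(c) corresponding precisely to the paper's shifted Sidon sequence of Corollary 3.3, the Riemann--Roch gap argument, and the genus accounting in Corollary 6.2. Part (ii) likewise mirrors the paper's concatenation with a Reed--Solomon multiplication-friendly pair, decoded by inner RS list decoding plus outer AG list recovery as in Theorem 7.4, so this is the same proof, not a genuinely different route.
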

 Note that the advantage of the second result of the above theorem is the smaller field size, which is optimal up to constant factors for a relative distance of $1-\eps$.

As a comparison to the {\em combinatorial} (not efficient) decodability of punctured RM codes, for $q \ge \Omega(d/\eps)$, a random puncturing of $\RM_q(n,d)$ of rate $\Omega(\eps)$ is list-decodable from an error-fraction of $(1-\sqrt{\eps})$ by virtue of the Johnson bound on list decoding. Our guarantee (i) above matches this decoding radius with an efficient algorithm under a quadratically larger field size requirement. The guarantee (ii) on the other hand gets the correct field size (for the desired relative distance of $(1-\eps)$), but has worse list decoding radius and rate.
 Specializing a general result of Rudra and Wootters~\cite{RW14}
on list-decodability of randomly punctured codes to Reed-Muller codes shows
that most puncturings of $\RM_q(n,d)$ of rate $\Omega\left(\frac{\eps}{\log^5 (1/\eps) \log q}\right)$ are list-decodable from
an error-fraction $(1-\eps)$, provided $q \ge
\Omega(d/\eps^2)$. Compared to this result, our rate in guarantee (i) and field size in guarantee (ii) are better, but both trade-offs in Theorem~\ref{thm:main-intro} are worse in terms of list decoding radius.
Going past the $1-\sqrt{\eps}$ error fraction for \emph{efficient} list decoding of a rate $\approx \eps$ puncturing remains a challenging open problem, even for the case of Reed-Solomon codes.

Efficiently decodable puncturings of Reed-Muller codes were
studied in \cite{DS11} under the label of noisy interpolating
sets. The authors of \cite{DS11} gave an explicit puncturing to a set
of size $C_d n^d$, for some constant $C_d$ depending on $d$, along with an efficient algorithm to
correct a $\exp(-O(d))$ fraction of errors, which approaches $0$ as
$d$ increases. Our focus is on the list decoding regime where the goal
is to correct a large fraction (approaching $1$) of errors.

\smallskip
\noindent {\bf Approach in brief.} We give a brief description of the
high level approach behind our two constructions claimed in
Theorem~\ref{thm:main-intro}. The idea behind (i) is to replace each
variable $x_i$, $i=1,2,\dots,n$, by a function $f_i$ from a suitable
algebraic function field $K$ that has $a_i$ poles at a specific point
$P_\infty$ and no poles elsewhere. The $a_i$'s are chosen to belong to
a Sidon sequence so that no two (multi) subsets of $\{a_1,\dots,a_n\}$
of size at most $d$ have the same sum. This ensures that every
polynomial of degree $d$ in $x_1,x_2,\dots,x_n$ is mapped to a
distinct function in $K$ with bounded pole order at $P_\infty$. The
evaluations of the $f_i$'s at enough rational points gives the desired
puncturing, which inherits the distance and list decodability
properties from a certain algebraic-geometric code. We note that the
ideas of replacing variables by functions of a function field was
already proposed in \cite{CT13,CGX12}, and replacing $x_i$ by
$x^{a_i}$ for an indeterminate $x$ and $a_i$ from a Sidon set was
proposed in \cite{B12}, and here we combine these two ideas to prove (i). For
the second construction, we concatenate the codes from (i) with
Reed-Solomon based multiplication friendly codes, and use the list
decodability of Reed-Solomon codes and the list recoverability of
algebraic-geometric codes to list decode the concatenated code. The idea to use multiplication friendly codes to reduce the field size in Reed-Muller puncturings appeared in the recent works~\cite{B12,GX14}.

\smallskip
\noindent {\bf Organization.}  The paper is organized as follows. In
Section 2, list decodability of random punctured Reed-Muller codes is
discussed in order to compare with our efficient list decoding of
explicit punctured Reed-Muller codes. Some results on Sidon sequences
are presented in Section 3 for the sake of construction of our
explicit punctured Reed-Muller codes. Preliminaries on function fields
and algebraic geometry codes are included in Section 4. In Section 5,
we introduce multiplicative friendly pairs in order to concatenate our
punctured Reed-Muller codes over larger fields. Section 6 is devoted to
explicit constructions punctured Reed-Muller codes. The main results
on efficient list decodability of our explicit punctured Reed-Muller
codes are presented in Section 7.

\section{List decodability of random punctured Reed-Muller codes }
Throughout, we denote by $X$ the variable vector $(x_1,x_2,\dots,x_n)$. Thus, the multivariate polynomial ring $\F_q[x_1,\dots,x_n]$ is denoted by $\F_q[X]$. A polynomial $F(X)$ of $\F_q[X]$ can be written as $F(X)=\sum_Ia_IX^I$ for some $a_I\in\F_q$, where $I=(i_1,i_2,\dots,i_n)\in\ZZ^n_{\ge 0}$ and $X^I$ denotes $x_1^{i_1}x_2^{i_2}\cdots x_n^{i_n}$.

For $n,d,q$ with $d<q$, define the Reed-Muller code
\[\RM_q(n,d):=\{(F(\bu))_{\bu\in\F_q^n}:\; F\in\F_q[X],\;\deg(F)\le d\}.\]
The  code $\RM_q(n,d)$ has length $q^n$ and dimension ${n+d\choose n}$.

A set is called a {\it multiset} if elements in this set could be repeated. For a multiset $\mT=\{\bu_1,\bu_2,\dots,\bu_N\}$ in $\F_q^n$, denote by $\RM_{q}(n,d)|_{\mT}$ the $\F_q$-linear code \[\RM_q(n,d)|_{\mT}:=\{(F(\bu_1),F(\bu_2),\dots,F(\bu_N)):\; F\in\F_q[X],\;\deg(F)\le d\}.\] It is called a {\it punctured Reed-Muller} code. Note that our punctured Reed-Muller code is slightly different from the notion of usual punctured codes where elements in the set $\mT$ are not repeated.

\begin{defn}
A code $C \subseteq \Sigma^n$ is said to be $(\rho,L)$-list decodable if for every $y \in \Sigma^n$, the number of codewords of $C$ which are within Hamming distance $\rho n$ from $y$ is at most $L$.
\end{defn}

To compare our explicit constructions with  non-constructive puncturings of Reed-Muller codes, we state the following result that can be easily derived from Corollary 2 in Sec. 4.1 of \cite{RW14}.

\begin{prop}\label{prop:2.2}
\begin{enumerate}
\item
  When $q\ge \Omega(d/\Ge^2)$, there is a puncturing of $\RM_q(n,d)$ of rate $\Omega\left(\frac{\eps}{\log^5 (1/\eps) \log q}\right)$ that is
  $(1-\Ge,O(1/\Ge))$-list decodable.
\item When $q\ge \Omega(d/\Ge)$, there is a puncturing of $\RM_q(n,d)$ of rate $\Omega(\eps)$ that is $(1-\sqrt{\eps},O(1/\eps))$-list decodable.
\end{enumerate}
\end{prop}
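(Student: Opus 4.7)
The statement reads like a direct translation of known results on list-decodability of randomly punctured codes into the Reed-Muller setting, and I would derive both parts from those general results.

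For part (1), the plan is to invoke Corollary~2 of Section~4.1 of \cite{RW14} verbatim. That theorem says that a uniformly random puncturing of any $q$-ary linear code of relative distance at least $1 - O(\eps^2)$, down to rate $R = \Omega\!\left(\tfrac{\eps}{\log^5(1/\eps)\log q}\right)$, is $(1-\eps, O(1/\eps))$-list-decodable with high probability. Since $\RM_q(n,d)$ has relative distance exactly $1 - d/q$, the distance hypothesis $d/q \le O(\eps^2)$ becomes the field-size assumption $q \ge \Omega(d/\eps^2)$. A positive-probability sample then yields the desired punctured Reed-Muller code.

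For part (2), the plan is less sophisticated: produce a random puncturing of relative distance $\ge 1 - \eps$ and then apply the Johnson bound, which guarantees $(1-\sqrt{\eps}, O(1/\eps))$-list-decodability for any code of relative distance $1 - \eps$. Concretely, set $k = \binom{n+d}{d}$ and sample $N = \Theta(k/\eps)$ points of $\F_q^n$ independently and uniformly to form the multiset $\mT$. For any fixed nonzero $F \in \F_q[X]$ of degree at most $d$, Schwartz-Zippel gives $\Pr_{\bu}[F(\bu)=0] \le d/q \le O(\eps)$ as long as $q \ge \Omega(d/\eps)$. A Chernoff bound then shows that $F$ vanishes on fewer than $\eps N$ coordinates of $\mT$ except with probability $e^{-\Omega(\eps N)} \le q^{-2k}$, and a union bound over the $\le q^k$ nonzero degree-$\le d$ polynomials $F$ guarantees a positive-probability puncturing in which every nonzero codeword of $\RM_q(n,d)|_{\mT}$ has relative weight at least $1-\eps$. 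This produces a puncturing of rate $k/N = \Omega(\eps)$ and relative distance $\ge 1-\eps$, to which the Johnson bound directly applies.

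The only nontrivial step is locating the Rudra-Wootters statement in the right form for part (1); the probabilistic argument in part (2) is entirely standard, as is the application of the Johnson bound. The main bookkeeping subtlety is tracking implied constants so that the Chernoff exponent in part (2) comfortably beats the $q^k$ union bound, and so that the Rudra-Wootters hypothesis in part (1) is exactly the claimed $q \ge \Omega(d/\eps^2)$.
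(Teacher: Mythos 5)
The paper does not give a from-scratch proof of this proposition: it simply says the result ``can be easily derived from Corollary 2 in Sec.\ 4.1 of [RW14].'' For part (1) your plan is the same, which is fine. For part (2) you substitute a self-contained Chernoff-plus-union-bound argument followed by the Johnson bound, which is exactly the informal argument the introduction of the paper alludes to (``by virtue of the Johnson bound''), so the spirit matches.

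However, your part~(2) parameter accounting has a genuine gap. With $N = \Theta(k/\eps)$ you have $\eps N = \Theta(k)$, so $e^{-\Omega(\eps N)} = e^{-\Omega(k)}$, and the asserted inequality $e^{-\Omega(\eps N)} \le q^{-2k} = e^{-2k\ln q}$ would require $\Omega(1) \ge 2\ln q$, which fails for $q$ large. To survive the union bound over the roughly $q^k$ nonzero degree-$\le d$ polynomials you actually need the per-polynomial failure probability to be below $q^{-k}$, i.e.\ $\eps N \gtrsim k\ln q$, hence $N \gtrsim (k\ln q)/\eps$ and rate $K/N \lesssim \eps/\log q$. There is no choice of the constant in the hypothesis $q \ge \Omega(d/\eps)$ that removes the $\log q$: making $d/q$ a smaller constant multiple of $\eps$ only improves the Chernoff exponent to $\eps N \cdot O(1)$, while the union bound still costs $k\ln q$, and $\ln q$ grows with $d$ and $1/\eps$. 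So the argument you describe proves rate $\Omega(\eps/\log q)$, not the claimed $\Omega(\eps)$. The RW14 analysis the paper cites gets around this via a chaining/net argument that avoids the naive codeword-by-codeword union bound; if you want to stay elementary you should either weaken the rate in part (2) to $\Omega(\eps/\log q)$ or appeal to RW14 for part (2) as well.
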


\section{Sidon sequences}\label{sec:sidon}

The notion of Sidon sequences was introduced by Simon Sidon to study Fourier series and later it became a well-known combinatorial problem and found many applications \cite{ET41,B12}. There is an efficient construction of Sidon sequences given in \cite{B12}. In this section,  we slightly improve the sizes of the elements in the Sidon sequences  \cite{B12} for the purpose of our application to efficient construction of punctured Reed-Muller codes.

For a nonnegative integer vector $I=(i_1,i_2,\dots, i_n)\in\ZZ_{\ge 0}^n$, we denote by $\wt(I)$ the sum $\sum_{k=1}^ni_k$.
\begin{defn} For an integer $d\ge 2$, a sequence $\{a_i\}_{i=1}^n$ of nonnegative integers is called $d$-Sidon sequence if, for any two distinct vectors $I=(i_1,i_2,\dots,i_n),J=(j_1,j_2,\dots,j_n)\in\ZZ_{\ge 0}^n$ with $\max\{\wt(I),\wt(J)\}\le d$, one has $\sum_{k=1}^ni_ka_{k}\not=\sum_{k=1}^nj_ka_{k}$.
\end{defn}
For most applications, we need the largest number in a Sidon sequence to be as small as possible. However, for our purpose, we also want a lower bound on the smallest number in the Sidon sequence. This can be ensured using the following lemma.
\begin{lemma}\label{lem:sidon1} If $\{a_i\}_{i=1}^n$ is  a $d$-Sidon sequence, then for any $M> \max_{1\le i\le n}\{a_i\}$, the sequence $\{a_i+dM\}_{i=1}^n$ is also a $d$-Sidon sequence.
\end{lemma}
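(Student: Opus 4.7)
The plan is to show directly that if a sum-equation $\sum_k i_k(a_k+dM)=\sum_k j_k(a_k+dM)$ holds for two distinct $I,J\in\ZZ_{\ge 0}^n$ with $\max\{\wt(I),\wt(J)\}\le d$, then we obtain a contradiction either with the Sidon property of $\{a_i\}$ or with the size bound $M>\max_i a_i$. The calculation separates cleanly along the value of $\wt(I)-\wt(J)$.

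First I would rewrite the putative equation as
\[\sum_{k=1}^n (i_k-j_k)a_k \;=\; dM\bigl(\wt(J)-\wt(I)\bigr).\]
Now split into two cases. If $\wt(I)=\wt(J)$, then the right-hand side is $0$, so $\sum_k i_k a_k=\sum_k j_k a_k$. Since $I\neq J$ and both have weight at most $d$, this directly violates the assumption that $\{a_i\}_{i=1}^n$ is a $d$-Sidon sequence, giving the desired contradiction.

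In the remaining case $\wt(I)\neq\wt(J)$, the right-hand side has absolute value at least $dM$. On the other hand, crudely bounding the left-hand side,
\[\Bigl|\sum_{k=1}^n(i_k-j_k)a_k\Bigr| \;\le\; \max\Bigl\{\sum_{k=1}^n i_k a_k,\ \sum_{k=1}^n j_k a_k\Bigr\} \;\le\; d\cdot\max_{1\le k\le n} a_k \;<\; dM,\]
using $\wt(I),\wt(J)\le d$ and the hypothesis $M>\max_k a_k$. This is a contradiction, and no step here requires anything subtle.

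There isn't really a substantive obstacle: the only mild care is in bounding the left-hand side by $d\cdot\max_k a_k$ rather than by something like $2d\cdot\max_k a_k$, which is obtained by noting that the positive and negative contributions on the left are each nonnegative combinations of the $a_k$'s with total coefficient sum at most $\wt(I)$ or $\wt(J)$ respectively. With this observation the strict inequality $d\cdot\max_k a_k<dM$ does all the work, and the lemma follows immediately.
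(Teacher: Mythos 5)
Your proposal is correct and takes essentially the same approach as the paper: the same case split on whether $\wt(I)=\wt(J)$, the same direct appeal to the Sidon property in the equal-weight case, and the same underlying estimate $d\cdot\max_k a_k<dM$ doing the work in the unequal-weight case. The only cosmetic difference is that you rearrange the putative equality so that the $a_k$-terms sit on one side and the $dM$-terms on the other before bounding, while the paper compares $\sum_k i_k(a_k+dM)$ and $\sum_k j_k(a_k+dM)$ directly.
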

\begin{proof} Consider two distinct vectors $I=(i_1,i_2,\dots,i_n),J=(j_1,j_2,\dots,j_n)\in\ZZ_{\ge 0}^n$ with
\[ \max\{\wt(I),\wt(J)\}\le d \ . \]
If $\wt(I)=\wt(J)$, then
\[\sum_{k=1}^ni_k(a_{k}+dM)=\sum_{k=1}^ni_ka_{k}+\wt(I)dM\not=\sum_{k=1}^nj_ka_{k}+\wt(J)dM=\sum_{k=1}^nj_k(a_{k}+dM).\]
 If $\wt(I)\not=\wt(J)$, we may assume that $\wt(I)<\wt(J)$. Then
\begin{eqnarray*} \sum_{k=1}^nj_k(a_{k}+dM)&=&\sum_{k=1}^nj_ka_{k}+\wt(J)dM\ge \wt(J)dM\ge dM+\wt(I)dM\\
&>& \sum_{k=1}^ni_ka_{k}+\wt(I)dM=\sum_{k=1}^ni_k(a_{k}+dM).\end{eqnarray*}
This completes the proof.
\end{proof}
An efficient construction of Sidon sequences was proposed in \cite[Lemma 59]{B12} based on finite fields and discrete logarithm. The largest number in the Sidon sequence given  in \cite[Lemma 59]{B12} is upper bounded by $n^{d+1}+O(n^d)$. If we apply this upper bound to our construction, one would get a punctured Reed-Muller code with rate zero (see Section \ref{sec:RM}). Therefore, we modify the construction of Sidon sequences in \cite[Lemma 59]{B12} as below.
\begin{lemma}\label{lem:sidon2} For arbitrary positive integers $d$ and  $n$, a $d$-Sidon sequence $\{a_i\}_{i=1}^n$ with
\[\max_{1\le i\le n}a_i\le n^d+O(n^{d-1})\] can be deterministically constructed in $O(n^{d/2+2})$ time.
\end{lemma}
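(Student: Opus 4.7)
The plan is to combine the classical Bose--Chowla construction of Sidon sets in finite fields with the shift operation of Lemma~\ref{lem:sidon1}. The improvement over the $n^{d+1} + O(n^{d})$ bound of \cite[Lemma~59]{B12} will come from carrying out Bose--Chowla inside $\F_{q^d}$ rather than inside $\F_{q^{d+1}}$, saving one factor of $q \approx n$.

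First, I would pick $q$ to be the smallest prime power with $q \ge n$; by the density of prime powers, $q = n + o(n)$, and such a $q$ can be found deterministically in time $\mathrm{poly}(n)$. Then I would compute a primitive element $\alpha \in \F_{q^d}$, with minimal polynomial $m(T) \in \F_q[T]$ of degree $d$, fix $n$ distinct elements $x_1, \ldots, x_n \in \F_q$, and set $b_i := \log_\alpha(x_i + \alpha)$, taking the unique representative in $\{0, 1, \ldots, q^d - 2\}$. Each of the $n$ discrete logarithms is computable in $O(q^{d/2})$ field operations via baby-step/giant-step, giving the claimed $O(n^{d/2 + 2})$ runtime overall.

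The main content of the proof is to verify that $\{b_i\}_{i=1}^n$ is same-weight Sidon: for distinct $I, J \in \ZZ_{\ge 0}^n$ with $\wt(I) = \wt(J) = w \le d$, one has $\sum_k i_k b_k \ne \sum_k j_k b_k$. A putative equality (as integers, hence mod $q^d - 1$) is equivalent to $P_I(\alpha) = P_J(\alpha)$ in $\F_{q^d}$, where $P_I(T) := \prod_k (x_k + T)^{i_k} \in \F_q[T]$ has degree exactly $w$. For $w < d$ both $P_I, P_J$ have degree strictly less than $\deg m = d$, so their difference does too; divisibility by $m$ forces $P_I = P_J$, whence $I = J$ by unique factorization in $\F_q[T]$. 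For $w = d$ both $P_I$ and $P_J$ are monic (each linear factor $x_k + T$ is), so $P_I - P_J$ has degree $< d$, and the same conclusion follows. Finally I would apply Lemma~\ref{lem:sidon1} with $M := q^d$---its proof only uses same-weight distinctness of the pre-shift sequence, which we just established---to obtain $a_i := b_i + dM$, a full $d$-Sidon sequence with $\max_i a_i \le (d+1) q^d = n^d + O(n^{d-1})$ asymptotically in $n$ for fixed $d$.

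The main obstacle, and the precise source of the improvement over~\cite{B12}, is the borderline case $w = d$ of the same-weight verification: here $\deg P_I = \deg P_J = \deg m$, and \emph{a priori} the polynomials could be congruent mod $m$ without being equal. The key observation is monicity: the leading $T^d$ coefficients of $P_I$ and $P_J$ are both equal to $1$, so they cancel in the difference. Without this small trick, one is forced to work inside $\F_{q^{d+1}}$ to keep $P_I$ and $P_J$ of degree strictly less than $\deg m$, recovering only the weaker bound of~\cite{B12}.
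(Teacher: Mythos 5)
Your approach is genuinely different from the paper's, and the same-weight verification in $\F_{q^d}$ is correct, but the final bound does not meet the one claimed in the lemma. The equality you write, $(d+1)q^d = n^d + O(n^{d-1})$, is false: with $q = n(1+o(1))$ one has $(d+1)q^d = (d+1)n^d(1+o(1))$, whose leading coefficient is $d+1$, whereas $n^d + O(n^{d-1})$ has leading coefficient $1$. The difference $d\,n^d$ is not absorbed by $O(n^{d-1})$ for any fixed $d\ge 1$. So your sequence $\{a_i\}$ satisfies $\max a_i \approx (d+1)n^d$, a factor of $d+1$ worse than what Lemma~\ref{lem:sidon2} asserts.

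This gap is structural, not a slip. By working in $\F_{q^d}$ you correctly save a factor of $q$ over the naive $\F_{q^{d+1}}$ construction, but you can only certify \emph{same-weight} distinctness of the pre-shift $b_i$: when $\wt(I)=d>\wt(J)$, the difference $P_I - P_J$ has degree exactly $d = \deg m$, and $m \mid P_I - P_J$ is genuinely possible without $P_I = P_J$. You are thus forced to invoke Lemma~\ref{lem:sidon1}, which costs the factor $d+1$. The paper sidesteps this by a different device: it also works over a prime field $\F_p$ with $p\approx n$, but inside $\F_{p^{d+1}}$, and takes discrete logarithms modulo $M = (p^{d+1}-1)/(p-1) = p^d + p^{d-1}+\cdots+1 \approx n^d$, i.e.\ in the quotient group $\F_{p^{d+1}}^*/\F_p^*$. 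Writing $\gamma+\alpha_i = \beta_i\gamma^{c_i}$ with $\beta_i\in\F_p^*$ and $0\le c_i<M$, a collision $\sum i_k c_k = \sum j_k c_k$ yields $P_I(\gamma)=\beta P_J(\gamma)$ for some $\beta\in\F_p^*$, and since $\gamma$ has degree $d+1$ over $\F_p$ while $P_I-\beta P_J$ is a nonzero polynomial of degree $\le d$ (nonzero by monicity of $P_I,P_J$ whether or not $\wt(I)=\wt(J)$), this is impossible. The quotient trick keeps $c_i<M\approx n^d$ \emph{and} delivers the full $d$-Sidon property in one step, with no shift and hence no $(d+1)$ loss inside the lemma. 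The shift is then applied only once, in Corollary~\ref{cor:3.3}, where the $(d+1)$ factor is explicitly tolerated.

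Concretely: your construction would be a valid proof of Corollary~\ref{cor:3.3} (and the downstream results would survive), but it does not prove Lemma~\ref{lem:sidon2} with the stated bound $n^d + O(n^{d-1})$. To fix it you would need to replace the $\F_{q^d}$-plus-shift route with the $\F_{p^{d+1}}^*/\F_p^*$ quotient route, or else weaken the lemma's bound to $O(d\cdot n^d)$.
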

\begin{proof} Let $p$ be the smallest prime satisfying $p\ge n$. Then we have $p=n+O(n^{0.525})$ (see \cite{BHP01}). Put $M:= (p^{d+1}-1)/(p-1)$. Then $M=n^d+O(n^{d-1})$.
Let $\Ga_1,\dots,\Ga_n$ be $n$ distinct elements of $\F_p$. One can find a primitive element $\Gg$ of $\F_{p^{d+1}}$ in time $O(p^{d/2})=O(n^{d/2})$ (see \cite[Section 11.1]{S09}). For each $i$, find the discrete logarithm $\ell_i<p^{d+1}-1$ such that $\Gg+\Ga_i=\Gg^{\ell_i}\in\F_{p^{d+1}}$ in time $O(p^{d/2+1})=O(n^{d/2+1})$ (see \cite[Section 11.2]{S09}).  Write $\ell_i=tM+c_i$ for some integer $c_i$ with $0\le c_i<M$. Hence, we have $\Gg+\Ga_i=\beta_i\Gg^{c_i}$, where $\Gb_i=(\Gg^M)^t\in\F_p^*$. We claim that $\{c_1,\dots,c_n\}$ is a $d$-Sidon sequence. Indeed, suppose that there were two distinct vectors $I=(i_1,i_2,\dots,i_n),J=(j_1,j_2,\dots,j_n)\in\ZZ_{\ge 0}^n$ with $\max\{\wt(I),\wt(J)\}\le d$  such that $\sum_{k=1}^ni_kc_{k}=\sum_{k=1}^nj_kc_{k}$. Then we have
$\Gg^{\sum_{k=1}^ni_kc_{k}}=\Gg^{\sum_{k=1}^nj_kc_{k}}$, i.e.,
$\prod_{k=1}^n(\Gg+\Ga_{k})^{i_k}=\Gb \prod_{k=1}^n(\Gg+\Ga_{k})^{j_k}$ for some $\Gb\in\F_{p}^*$. This implies that $\Gg$ is a root of nonzero polynomial $\prod_{k=1}^n(x+\Ga_{k})^{i_k}-\Gb \prod_{k=1}^n(x+\Ga_{k})^{j_k}$ of degree less than $d+1$. This contradiction implies that $\{c_1,\dots,c_n\}$ is a $d$-Sidon sequence.
\end{proof}
Combining Lemma \ref{lem:sidon1} with Lemma \ref{lem:sidon2} gives the following efficient construction of Sidon sequences in a relatively small interval.
\begin{cor}\label{cor:3.3} For arbitrary positive integers $d$ and  $n$, there exists a $d$-Sidon sequence $\{a_i\}_{i=1}^n$ satisfying $dM\le a_i< (d+1)M$ for all $1\le i\le n$, where $M=n^d+O(n^{d-1})$.  Furthermore, this sequence can be constructed in time $O(n^{d/2+2})$.
\end{cor}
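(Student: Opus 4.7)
The plan is to simply compose the two preceding lemmas: first invoke Lemma \ref{lem:sidon2} to get a short $d$-Sidon sequence, then apply the shift of Lemma \ref{lem:sidon1} to push all elements into the window $[dM,(d+1)M)$ while preserving the Sidon property.

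In more detail, I would first apply Lemma \ref{lem:sidon2} to construct a $d$-Sidon sequence $\{c_i\}_{i=1}^n$ with $0 \le c_i < M$, where $M = n^d + O(n^{d-1})$, in time $O(n^{d/2+2})$. Note the bound $c_i < M$ means in particular that $\max_{1\le i \le n} c_i < M$, so the hypothesis of Lemma \ref{lem:sidon1} is satisfied with this value of $M$.

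Next I would set $a_i \eqdef c_i + dM$ for each $i$. Lemma \ref{lem:sidon1} immediately yields that $\{a_i\}_{i=1}^n$ is again a $d$-Sidon sequence. The size bounds are then an elementary calculation: from $0 \le c_i < M$ we get
\[
dM \;\le\; c_i + dM \;<\; (d+1)M,
\]
which is exactly the desired window $dM \le a_i < (d+1)M$. The shift itself takes $O(n)$ arithmetic operations, which is negligible compared to $O(n^{d/2+2})$, so the overall running time is dominated by the call to Lemma \ref{lem:sidon2}.

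There is no real obstacle here since both ingredients are already in place; the only thing to be slightly careful about is verifying the hypothesis $M > \max_i c_i$ of Lemma \ref{lem:sidon1}, which follows directly from the strict inequality $c_i < M$ in the conclusion of Lemma \ref{lem:sidon2}. Everything else is routine bookkeeping of the inequalities and the runtime.
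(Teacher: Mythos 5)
Your proof is correct and follows exactly the route the paper takes: the text immediately after Lemma~\ref{lem:sidon2} simply states that Corollary~\ref{cor:3.3} is obtained by combining Lemma~\ref{lem:sidon1} with Lemma~\ref{lem:sidon2}, which is precisely the shift-by-$dM$ composition you carry out. The only small observation worth flagging is that the \emph{statement} of Lemma~\ref{lem:sidon2} only guarantees $\max_i a_i \le n^d + O(n^{d-1})$, whereas invoking Lemma~\ref{lem:sidon1} with this specific $M$ requires the strict inequality $\max_i c_i < M$; you correctly pull that strict bound out of the proof of Lemma~\ref{lem:sidon2} (where $0\le c_i < M = (p^{d+1}-1)/(p-1)$), and in any event one could replace $M$ by $M+1$ without affecting the asymptotics, so there is no real gap.
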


\section{Function fields and algebraic geometry codes}\label{sec:FF}

\subsection{Algebraic function fields}\label{subsec:FF}
Let us introduce some background on algebraic function fields over finite fields. The reader may refer to \cite{Stich93,TV91,NX01} for details.

Let $F$ be a function field of one variable over $\F_q$, i.e., $\F_q$ is algebraically closed in $F$ and $F$ is a finite extension over $\F_q(x)$ for any $x\in F\setminus\F_q$.
We denote by $g(F)$ the genus of $F$. In case there is no confusion, we simply use $g$ to denote the genus $g(F)$. Denote by $N(F/\F_q)$ the number of rational  places of $F$.

For a place $Q$ of $F$, we denote by $\nu_Q$ the normalized discrete valuation at $Q$. Denote by $\PP_{F}$  the set of places of $F$. A divisor $G$ of $F$ is a  formal sum $G=\sum_{Q\in\PP_{F}}  n_QQ$, where  $n_Q$ are integers and almost all $n_Q$ are equal to zero.  The degree $\deg(G)$ of $G$ is defined by $\deg(G)=\sum_{Q\in\PP_{F}}  n_Q\deg(Q)$. The support of $G$, denoted by $\Supp(G)$, is defined to be the set $\{Q\in\PP_{F}:\; n_Q\not=0\}$. For two divisors $G=\sum_{Q\in\PP_{F}}  n_QQ$ and $H=\sum_{Q\in\PP_{F}}  m_QQ$,
 we say  that $H$ is less than or equal to $G$, denoted by $H\le G$, if $m_Q\le n_Q$ for all $Q\in\PP_{F}$. For a nonzero function $x$ in $F$, one can define a principal divisor $(x):=\sum_{Q\in\PP_{F}} \nu_Q(x)Q$. It is a well-known fact that the degree of $(x)$ is zero. The zero divisor and pole divisor of $x$ are defined to be $(x)_0:=\sum_{Q\in\PP_{F},\nu_Q(x)>0} \nu_Q(x)Q$ and $(x)_{\infty}:=-\sum_{Q\in\PP_{F},\nu_Q(x)<0} \nu_Q(x)Q$, respectively.

Now for a divisor $G$, one can
define the Riemann-Roch space $\mL(G)$ associated with $G$ by
\begin{equation}\label{eq:4.1} \mL(G):=\{x\in F^*:\; (x)+G\ge 0\}\cup\{0\}.
\end{equation}
Then the Riemann-Roch Theorem says that $\mL(G)$ is a finitely dimensional space over $\F_q$ and its dimension, denoted by $\ell(G)$, is lower bounded by
\begin{equation}\label{eq:4.2} \ell(G)\ge \deg(G)-g+1.
\end{equation}
Furthermore, the equality in (\ref{eq:4.2}) holds if $\deg(G)\ge 2g-1$.
\subsection{Algebraic geometry codes}\label{subsec:AG}
The books \cite{Stich93,TV91,NX01} are standard textbooks for algebraic geometry codes. In this subsection, we briefly introduce the construction of algebraic geometry codes, i.e., Goppa geometric codes.

Let $\mP=\{P_1,P_2,\dots,P_N\}$ be a set of rational places of $F$ and let $G$ be a divisor of $F$ such that $\Supp(G)\cap\mP=\emptyset$. Define the algebraic geometry code by
\begin{equation}\label{eq:4.3}
C(\mP;G)=\{(f(P_1),f(P_2),\dots,f(P_N)):\; f\in\mL(G)\}.
\end{equation}
Then $C(\mP;G)$ is an $\F_q$-linear code of length $N$. The dimension of  $C(\mP;G)$ is $\ell(G)$ as long as $\deg(G)<N$. The minimum distance of the code is at least $N-\deg(G)$ and it is called the designed distance of $C(\mP;G)$.
\subsection{Garcia-Stichtenoth tower}\label{subsec:GS}
There
are two optimal Garcia-Stichtenoth towers that are equivalent. The reader may refer to \cite{GS95,GS96} for
detailed results on the Garcia-Stichtenoth function tower. A low-complexity algorithm for the construction of the Riemann-Roch space and algebraic geometry codes based on the tower of \cite{GS95,GS96} was given in \cite{SAKSD01}.

Let $r$ be a prime power.  The Garcia-Stichtenoth tower is defined by the following recursive
equations \cite{GS96}
\begin{equation}\label{eq:4.4} z_{i+1}^r+z_{i+1}=\frac{z_i^{r}}{z_i^{r-1}+1},\quad i=1,2,\dots,t-1.\end{equation}
Put  $F_t=\F_{r^2}(z_1,z_2,\dots,z_{t})$ for $t\ge 2$.

The function field $F_t$ has at least $r^{t-1}(r^2-r)+1$ rational places. One of these is the ``point at infinity" which is the unique pole $\Pin$ of $z_1$.
The genus $g_t$ of the function field $F_t$ is given by
\[g_t=\left\{\begin{array}{ll}
(r^{t/2}-1)^2&\mbox{if $t$ is even}\\
(r^{(t-1)/2}-1)(r^{(t+1)/2}-1)&\mbox{if $t$ is odd.}\end{array}
\right.\]
Thus, the genus $g_t$ is at most $r^t-r^{t/2}$. Furthermore, the algebraic geometry codes constructed from the Garcia-Stichtenoth tower can be constructed in time cubic in the block length (see \cite{SAKSD01}).

\section{Multiplication friendly pairs}
Multiplication friendly pairs were first introduced by  D.V. Chudnovsky and G.V. Chudnovsky \cite{CC87} to study the complexity of
bilinear multiplication algorithms in extension fields.   Following their work, Shparlinski, Tsfasman and Vl\u{a}du\c{t} \cite{STV92} systematically studied this idea
and extended the result in \cite{CC87}. Recently, multiplication friendly pairs were used to study multiplicative secret sharing \cite{CCX11} and punctured Reed-Muller codes (or equivalently, hitting sets)  \cite{B12,GX14}.

For $d$ vectors $\bc_i=(c_{i1},c_{i2},\dots,c_{im})\in\F_q^m$ ($i=1,2,\dots,d)$, we denote by $\bc_1\ast\bc_2\ast\cdots\ast\bc_d$ the coordinate-wise product $(\prod_{i=1}^dc_{i1},\prod_{i=1}^dc_{i2},\dots,\prod_{i=1}^dc_{im})$. In particular, we denote by $\bc^{\ast d}$ the vector $(c_1^d,\cdots,c_n^d)$ if $\bc=(c_1,\dots,c_n)$.

For an $\F_q$-linear code $C$, we denote by $C^{\ast d}$ the linear code\[{\rm Span}\{\bc_1\ast\bc_2\ast\dots\ast\bc_d:\; \bc_1,\bc_2,\dots,\bc_d\in C\}.\]

\begin{defn}\label{def:2} For $m\ge k$, a pair $(\pi, \psi)$ is called a $(d,k,m)_q$-multiplication friendly pair if $\pi$ is an $\F_q$-linear map from $\F_{q^k}$ to $\F_q^m$ and $\psi$ is   an $\F_q$-linear map from $\F_q^m$  to $\F_{q^k}$ such that $\pi(1)=(1,\dots,1)$ and $\psi(\pi(\Ga_1)\ast\cdots\ast\pi(\Ga_d))=\Ga_1\cdots\Ga_d$ for all $\Ga_i\in\F_{q^k}$. A $(2,k,m)_q$-multiplication friendly pair is also called a bilinear multiplication friendly pair.
\end{defn}

The following two lemmas can be found in \cite{CC87,GX14}.

\begin{lemma}\label{lem:5.1} If $(\pi, \psi)$  is  a $(d,k,m)_q$-multiplication friendly pair, then $\pi$ is injective.
\end{lemma}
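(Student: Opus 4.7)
The plan is to show that any element $\alpha \in \F_{q^k}$ with $\pi(\alpha) = 0$ must actually be zero, by exploiting the two defining properties of a multiplication friendly pair in combination with $\F_q$-linearity of $\psi$.

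Suppose $\pi(\alpha) = \mathbf{0} \in \F_q^m$. The key trick is to pad with $d-1$ copies of $1$ and invoke the multiplicative identity in Definition~\ref{def:2}. Concretely, setting $\alpha_1 = \alpha$ and $\alpha_2 = \cdots = \alpha_d = 1$, the property $\psi(\pi(\alpha_1) \ast \cdots \ast \pi(\alpha_d)) = \alpha_1 \cdots \alpha_d$ yields
\[
\alpha = \alpha \cdot 1 \cdots 1 = \psi\bigl(\pi(\alpha) \ast \pi(1) \ast \cdots \ast \pi(1)\bigr).
\]
Now I would use $\pi(1) = (1,1,\dots,1)$, which makes the coordinate-wise product collapse to $\pi(\alpha) \ast (1,\dots,1) \ast \cdots \ast (1,\dots,1) = \pi(\alpha)$. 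By the hypothesis $\pi(\alpha) = \mathbf{0}$, this is the zero vector in $\F_q^m$.

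Finally, since $\psi$ is $\F_q$-linear, $\psi(\mathbf{0}) = 0 \in \F_{q^k}$, and we conclude $\alpha = 0$. Hence $\ker(\pi) = \{0\}$, so $\pi$ is injective. There is no real obstacle here; the only subtlety worth flagging is that the argument genuinely requires both conditions on $(\pi,\psi)$, namely the normalization $\pi(1) = \mathbf{1}$ (so that padding by $1$'s acts as the identity on the right-hand side of the product) and the $\F_q$-linearity of $\psi$ (so that $\psi(\mathbf{0}) = 0$). Dropping either hypothesis breaks the deduction.
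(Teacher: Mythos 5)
Your argument is correct and is essentially identical to the paper's proof: both pad $\alpha$ with $d-1$ copies of $1$, use $\pi(1)=\mathbf{1}$ to reduce the star product to $\pi(\alpha)$, and invoke $\F_q$-linearity of $\psi$ to conclude $\alpha = \psi(\mathbf{0}) = 0$. The paper phrases it as a contradiction while you argue directly that $\ker(\pi)=\{0\}$, but the content is the same.
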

\begin{proof} Suppose that $\pi$ is not injective. Then there exists a nonzero element $\Ga\in\F_{q^k}$ such that $\pi(\Ga)=\bo$. Hence $0=\psi(\bo)=\psi(\pi(\Ga)\ast\pi(1)\ast\cdots\ast\pi(1))=\Ga\cdot1\cdots\cdot 1=\Ga$, which is a contradiction.
\end{proof}

\begin{lemma}\label{lem:5.2} If $d,m,k,q$ satisfy $d\ge 2$ and $d(k-1)<m\le q$, then a  $(d,k,m)_q$-multiplication friendly pair $(\pi,\psi)$ such that $(\pi(\F_{q^k}))^{\ast d}$ is a $q$-ary linear code with minimum distance at least $m-d(k-1)$ can be explicitly constructed in polynomial time.
\end{lemma}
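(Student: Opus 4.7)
The plan is to realize $(\pi,\psi)$ through a Reed-Solomon code. Since $q\ge m$, pick $m$ distinct elements $\Ga_1,\dots,\Ga_m\in\F_q$. Construct a monic irreducible polynomial $g(x)\in\F_q[x]$ of degree $k$ in deterministic polynomial time (standard, e.g.\ via Shoup's algorithm), and identify $\F_{q^k}$ with $\F_q[x]/(g(x))$ using the $\F_q$-basis $\{1,x,\dots,x^{k-1}\}$. For $\Ga\in\F_{q^k}$ let $f_\Ga(x)\in\F_q[x]$ denote the unique representative of degree less than $k$. I would then define the two maps by
\[
\pi(\Ga):=\bigl(f_\Ga(\Ga_1),\dots,f_\Ga(\Ga_m)\bigr),\qquad \psi(\bv):=L_{\bv}(x)\bmod g(x),
\]
where $L_{\bv}(x)\in\F_q[x]$ is the unique polynomial of degree less than $m$ with $L_{\bv}(\Ga_j)=v_j$ for all $j$ (the Lagrange interpolant). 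Both maps are manifestly $\F_q$-linear, and $\pi(1)=(1,\dots,1)$ because $f_1=1$.

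Next I would verify the multiplicative compatibility. Given $\Gb_1,\dots,\Gb_d\in\F_{q^k}$, set $h(x):=f_{\Gb_1}(x)\cdots f_{\Gb_d}(x)$. By construction, $\pi(\Gb_1)\ast\cdots\ast\pi(\Gb_d)$ is precisely the evaluation vector $(h(\Ga_1),\dots,h(\Ga_m))$. Since $\deg h\le d(k-1)<m$, the interpolant $L_{\pi(\Gb_1)\ast\cdots\ast\pi(\Gb_d)}$ coincides with $h$, and reducing modulo $g$ yields the product $\Gb_1\cdots\Gb_d$ in $\F_{q^k}$. This establishes $\psi(\pi(\Gb_1)\ast\cdots\ast\pi(\Gb_d))=\Gb_1\cdots\Gb_d$, as required. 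Injectivity of $\pi$ is then automatic by Lemma~\ref{lem:5.1}.

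Finally, to obtain the minimum distance claim for $(\pi(\F_{q^k}))^{\ast d}$, I would identify this code with the Reed-Solomon code of degree at most $d(k-1)$ evaluated at $\Ga_1,\dots,\Ga_m$. The inclusion ``$\subseteq$'' is immediate, since every generator $\pi(\Gb_1)\ast\cdots\ast\pi(\Gb_d)$ is the evaluation of a polynomial of degree at most $d(k-1)$. For the reverse, note that every monomial $x^j$ with $0\le j\le d(k-1)$ factors as $x^{j_1}\cdots x^{j_d}$ with each $j_i\le k-1$, so its evaluation vector lies in the $\F_q$-span of the generators; by linearity this extends to all polynomials of degree at most $d(k-1)$. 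The resulting Reed-Solomon code has minimum distance $m-d(k-1)$ by the MDS property. The main conceptual step, and the one I expect to require the most care, is exactly this identification with the full Reed-Solomon code rather than a proper subcode, which rests on the monomial factorization argument above; every remaining ingredient (constructing $g$, interpolating, reducing modulo $g$) is standard and executable in time polynomial in $m$, $k$, $d$, and $q$.
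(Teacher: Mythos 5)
Your construction is essentially the same as the paper's: identify $\F_{q^k}$ with $\F_q[x]/(g(x))$, define $\pi$ by evaluation of the degree-$(<k)$ representative at $m$ distinct points, define $\psi$ by interpolation followed by reduction mod $g$, and observe that $(\pi(\F_{q^k}))^{\ast d}$ sits inside a Reed--Solomon code of degree at most $d(k-1)$. The only (harmless) divergence is that the paper is content to note $(\pi(\F_{q^k}))^{\ast d}$ is a \emph{subcode} of that Reed--Solomon code, which already gives the claimed distance lower bound, whereas you go one step further and show via the monomial-factorization argument that it equals the full Reed--Solomon code; that extra observation is correct but not needed for the stated conclusion.
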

\begin{proof} Consider the polynomial ring  $\F_q[x]$.
   Choose an irreducible polynomial $Q(x)$ of degree $k$ and identify $\F_{q^k}$ with the residue field $\F_q[x]/(Q(x))$.  Let $\Ga_1,\Ga_2,\dots,\Ga_m$ be distinct elements of $\F_q$.

For a polynomial $f(x)\in \F_q[x]$, we denote by $\overline{f(x)}$ (or simply $\bar{f}$) to denote the residue of $f(x)$ in the residue field $\F_q[x]/(Q(x))$. Define an $\F_q$-linear map $\pi$ from $\F_{q^k}=\F_q[x]/(Q(x))$ to $\F_q^m$ given by $\pi(\bar{z})=(z(\Ga_1),z(\Ga_2),$ $\dots,z(\Ga_m))$ for an element $\bar{z}\in \F_q[x]/(Q(x))$ with $\deg(z(x))\le k-1$. This is injective since $k< m$ and $\pi(1)=(1,1,\dots,1)$.

Let $C$ be the Reed-Solomon code given by $\{(f(\Ga_1),\dots,f(\Ga_m)):\; f\in\F_q[x],\ \deg(f)\le d(k-1)\}$. Since $d(k-1)<m$, $C$ is a $q$-ary $[m,d(k-1)+1,m-d(k-1)]$-linear code in $\F_q^m$. For a codeword $\bc\in C$, we denote it by $\bc_f$ if $\bc=(f(\Ga_1),\dots,f(\Ga_m))$. Thus, we can define a map $\psi$ from $C$ to $\F_{q^k}=\F_q[x]/(Q(x))$ given by $\psi(\bc_f)=\bar{f}\in \F_q[x]/(Q(x))$. It is clear that $\psi$ is $\F_q$-linear. Thus, we can extend the domain of $\psi$ from $C$ to $\F_q^m$ so that it is an $\F_q$-linear map from $\F_q^m$ to $\F_{q^k}$.

As $(\pi(\F_{q^k}))^{\ast d}$ is a subcode of $C$,  it has  minimum distance at least $m-d(k-1)$.

Finally, we verify the following: for any $\bar{z}_1,\bar{z}_2,\dots,\bar{z}_d\in \F_q[x]/(Q(x))$ with $\deg(z_i(x))\le k-1$ for $1\le i\le d$, we have
\[\psi(\pi({\bar{z}_1})\ast\cdots\ast\pi({\bar{z}_d}))=\psi((z_1\cdots z_d)(\Ga_1),\dots,(z_1\cdots z_d)(\Ga_m))\\
=\overline{z_1z_2\cdots z_d}=\bar{z}_1\bar{z}_2\cdots \bar{z}_d.\]
This completes the proof.
\end{proof}

\section{Constructions of punctured Reed-Muller codes}\label{sec:RM}

In this section, we propose two constructions of punctured Reed-Muller codes. Both the constructions involve algebraic function fields over finite fields.
\subsection{Construction I}\label{subsec:6.1}
The first construction combines algebraic geometry codes with Sidon sequences. The idea of this construction is to replace variables $x_i$ in $\F_q[x_1,\dots,x_n]$ by suitable functions in a function field. The idea of replacing variables by functions of a function field was already proposed in  \cite{CGX12,CT13}. Furthermore, Sidon sequences were used to construct hitting sets and black box sets for polynomial identity testing in \cite{B12}. More precisely, in  \cite{B12}, variable $x_i$ is replaced by $x_i^{a_i}$, where $a_i$ is a number in the Sidon sequence. In this section, we combine the above two ideas to obtain explicit random punctured Reed-Muller codes.

\begin{lemma}\label{lem:6.1} Given a function field $K/F_q$ with $N+1$ distinct rational points $\Pin, P_1,P_2,\dots,P_N$, a Sidon sequence $\{a_i\}_{i=1}^n$ and functions $\{f_i\}_{i=1}^n$ of $K$ with $(f_i)_{\infty}=a_i\Pin$ for all $1\le i\le n$, then the punctured Reed-Muller code $\RM_{q}(n,d)|_{\mT_1}$ is a subcode of $C(\mP, dL\Pin)$ defined in Section  \ref{subsec:AG}  as long as $\max_{1\le i\le n}a_i\le L$, where $\mP=\{P_1,P_2,\dots,P_N\}$ and
$\mT_1:=\{(f_1(P_i),\dots,f_n(P_i)):\; i=1,2,\dots, N\}$. Thus,  $\RM_{q}(n,d)|_{\mT_1}$ has minimum distance at least $N-dL$. Furthermore, $\RM_{q}(n,d)|_{\mT_1}$ has  dimension ${n+d\choose n}$ if $N>dL$.
\end{lemma}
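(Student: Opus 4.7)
The plan is to relate each polynomial $F$ of degree at most $d$ to the function $F(f_1,\dots,f_n) \in K$ obtained by substituting the variables by the chosen functions, control its pole behaviour so that it lands in the Riemann--Roch space $\mL(dLP_\infty)$, and then use the Sidon property to separate distinct polynomials.

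\textbf{Step 1 (subcode containment).} For a monomial $X^I = x_1^{i_1}\cdots x_n^{i_n}$ with $\wt(I)\le d$, the function $f^I := \prod_{k=1}^n f_k^{i_k} \in K$ has poles only at $P_\infty$ (since every $f_k$ does), and its pole order there is exactly $\sum_k i_k a_k \le d\max_k a_k \le dL$. Hence for any $F(X)=\sum_I a_I X^I$ with $\deg F\le d$, the function $F(f_1,\dots,f_n)=\sum_I a_I f^I$ lies in $\mL(dLP_\infty)$. Evaluating at the $P_i$ is unambiguous because none of the $P_i$ is a pole, so the codeword obtained by evaluating $F$ on $\mT_1$ equals the codeword obtained by evaluating $F(f_1,\dots,f_n)$ on $\mP$. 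This exhibits $\RM_q(n,d)|_{\mT_1}$ as a subcode of $C(\mP,\,dLP_\infty)$.

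\textbf{Step 2 (distance).} By the designed distance of algebraic geometry codes recalled in Section~\ref{subsec:AG}, the code $C(\mP,\,dLP_\infty)$ has minimum distance at least $N-dL$, so the same bound holds for any subcode, in particular for $\RM_q(n,d)|_{\mT_1}$.

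\textbf{Step 3 (dimension, the main point).} It suffices to show that the $\F_q$-linear evaluation map $F \mapsto (F(f_1(P_j),\dots,f_n(P_j)))_{j=1}^N$ from polynomials of degree $\le d$ to $\F_q^N$ is injective; then the dimension equals the number of monomials of degree $\le d$, namely $\binom{n+d}{n}$. Suppose $F=\sum_I a_I X^I\neq 0$ with $\deg F\le d$. The Sidon property guarantees that the pole orders $\nu_{P_\infty}(f^I)=-\sum_k i_k a_k$ for the distinct multi-indices $I$ with $a_I\ne 0$ are all distinct. By the strict triangle inequality for valuations applied to sums with distinct pole orders, the pole order of $F(f_1,\dots,f_n)$ at $P_\infty$ equals the maximum of these, so $F(f_1,\dots,f_n)$ is a nonzero function in $\mL(dLP_\infty)$. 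Its pole divisor has degree at most $dL$, hence (since degree of zeros equals degree of poles for a nonzero rational function) its zero divisor also has degree at most $dL$. If the evaluation at all $P_j$ vanished, the function would have at least $N$ zeros (counted as rational places, each of degree one), contradicting $N>dL$. Thus the evaluation map is injective and the dimension is $\binom{n+d}{n}$.

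The only nonroutine ingredient is Step 3, where one must carefully combine the Sidon condition (which prevents cancellation of the leading pole term when the variables are substituted by the $f_i$) with the standard fact that a nonzero function in $\mL(dLP_\infty)$ has at most $dL$ zeros; everything else is a direct bookkeeping of pole orders and an appeal to the AG-code distance bound.
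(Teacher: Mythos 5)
Your proof is correct and follows the same approach as the paper's: the subcode containment via pole-order bookkeeping, the Goppa distance bound for the ambient AG code, and the Sidon-plus-ultrametric argument showing the leading pole term cannot cancel, so that a nonzero polynomial maps to a nonzero function in $\mL(dL\Pin)$ and hence to a nonzero codeword when $N>dL$. The only cosmetic difference is that you re-derive the nonvanishing from "a nonzero function in $\mL(dL\Pin)$ has at most $dL$ zeros" rather than citing the positive minimum distance of $C(\mP,dL\Pin)$ directly, but these are the same fact.
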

 \begin{proof}  It is easy to see that every codeword of $\RM_{q}(n,d)|_{\mT_1}$ is of the form $(F(f_1,\dots,f_n)(P_1),\dots,$\\  $F(f_1,\dots,f_n)(P_N))$ for a polynomial $F(x_1,\dots,x_n)\in\F_q[x_1,\dots,x_n]$.  Let $F=\sum_{\wt(I)\le d} \Ga_{I}X^I\in\F_{q}[X]$, where $X^I=x_1^{i_1}\cdots x_n^{i_n}$. Then  $F(f_1,f_2,\dots,$ $f_n)\in \mL(dL\Pin)$.
  Our first result follows.

 We consider the pole of $f_1^{i_1}\cdots f_n^{i_n}$. It is easy to see that $\nu_{\Pin}(f_1^{i_1}\cdots f_n^{i_n})=\sum_{k=1}^ni_ka_k$. Thus, $\nu_{\Pin}(f_1^{i_1}\cdots f_n^{i_n})\not=\nu_{\Pin}(f_1^{j_1}\cdots f_n^{j_n}) $ as long as $I\not=J$ and $\max\{\wt(I),\wt(J)\}\le d$. This implies that  $F(f_1,f_2,\dots,f_n)$ is a nonzero function of $\mL(dL\Pin)$ whenever $F(X)$ is a nonzero polynomial.

As $\RM_{q}(n,d)|_{\mT_1}$ is a subcode of $C(\mP, dL\Pin)$, the minimum distance $\RM_{q}(n,d)|_{\mT_1}$ is at least the minimum distance of $C(\mP, dL\Pin)$. The desired result on the minimum distance of $\RM_{q}(n,d)|_{\mT_1}$ follows from the Goppa minimum distance bound for $C(\mP, dL\Pin)$.

 Since $N>dL$, the corresponding $(F(f_1,\dots,f_n)(P_1),\dots,F(f_1,\dots,f_n)(P_N))$ is a nonzero vector. Therefore, the dimension of $\RM_{q}(n,d)|_{\mT_1}$ remains the same as $\RM_{q}(n,d)$, i.e., ${n+d\choose n}$.
 \end{proof}

By applying the Garcia-Stichtenoth tower and the Sidon sequnec in Corollary \ref{cor:3.3}, we obtain the following result.
\begin{cor}\label{cor:6.2} Let  $q=r^2$ for a prime power $r$ and let $M=n^d+O(n^{d-1})$. Put $t=\lceil\log_{{r}} (dM/2)\rceil$ and $N=r^{t}({r}-1)$. Then $\RM_{q}(n,d)|_{\mT_1}$ defined in Lemma \ref{lem:6.1} with the function field $F_t$ in the Garcia-Stichtenoth tower  is a $q$-ary $[N,{n+d\choose n}, \ge N-d(d+1)M]$-linear code.

In particular, if $d\le \Ge(\sqrt{q}-1)/2-1$  for some $0<\Ge<1$, then $\RM_{q}(n,d)|_{\mT_1}$ is a $q$-ary $[N,{n+d\choose n}]$-linear code with relative minimum distance at least $1-\Ge$ and rate
${{n+d} \choose n}/N \ge \Omega\left(\frac{\Ge}{d!}\right)$. Furthermore, one can deterministically construct $\RM_{q}(n,d)|_{\mT_1}$ in ${\rm poly}(n^d/\Ge)$ time.
 \end{cor}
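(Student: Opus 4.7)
The plan is to instantiate Lemma~\ref{lem:6.1} with $K=F_t$, the $t$-th step of the Garcia--Stichtenoth tower over $\F_q$, and the $d$-Sidon sequence produced by Corollary~\ref{cor:3.3}. First, I use Corollary~\ref{cor:3.3} to obtain a $d$-Sidon sequence $\{a_i\}_{i=1}^n$ with $dM\le a_i<(d+1)M$, $M=n^d+O(n^{d-1})$, in time $O(n^{d/2+2})$, and pick $t=\lceil\log_r(dM/2)\rceil$. By Section~\ref{subsec:GS}, $F_t$ has at least $r^{t-1}(r^2-r)+1=r^t(r-1)+1=N+1$ rational places; take $\Pin$ to be the pole of $z_1$ and $P_1,\dots,P_N$ to be any other $N$ of them.

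Second, the genus of $F_t$ satisfies $g_t\le r^t-r^{t/2}\le r^t$, and the choice of $t$ — together with a harmless shift of $\{a_i\}$ via Lemma~\ref{lem:sidon1} if necessary — ensures $a_i\ge 2g_t$ for every $i$. Hence each $a_i$ is a pole number at $\Pin$ by Riemann--Roch, so a function $f_i\in F_t$ with $(f_i)_\infty=a_i\Pin$ can be extracted from a basis of $\mL(L\Pin)$ with $L=\max_i a_i\le (d+1)M$. Such a basis is computable in time cubic in the block length by the algorithm of \cite{SAKSD01}.

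Feeding these ingredients into Lemma~\ref{lem:6.1} immediately yields that $\RM_q(n,d)|_{\mT_1}$ is a $q$-ary linear code of length $N$, dimension ${n+d\choose n}$ (using $N>dL$), and minimum distance at least $N-dL\ge N-d(d+1)M$. For the ``in particular'' clause, the hypothesis $d\le \epsilon(\sqrt q-1)/2-1$ rearranges to $r-1\ge 2(d+1)/\epsilon$, and combined with $r^t\ge dM/2$ gives
\[
\frac{d(d+1)M}{N}\le \frac{d(d+1)M}{(dM/2)(r-1)}=\frac{2(d+1)}{r-1}\le \epsilon,
\]
so the relative minimum distance is at least $1-\epsilon$. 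For the rate, ${n+d\choose n}\ge n^d/d!$ and $N=r^t(r-1)=O(q\cdot n^d)$, which combined with $q=O(d^2/\epsilon^2)$ from the hypothesis gives rate $\Omega(\epsilon/d!)$.

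The main subtle step is ensuring that every $a_i$ actually lies in the Weierstrass semigroup at $\Pin$, i.e., that functions with pole divisor \emph{exactly} $a_i\Pin$ exist in $F_t$ — this is the only nontrivial interaction between the Sidon construction and the tower. It is handled cleanly by Lemma~\ref{lem:sidon1}, which lets me translate the Sidon sequence into any range of length $M$ without destroying the Sidon property; choosing the shift to clear $2g_t$ suffices, and in the regime $r\ge 2(d+1)/\epsilon+1$ the resulting $L$ remains of order $(d+1)M$ so the distance bookkeeping above is unaffected. Everything else is assembly: Corollary~\ref{cor:3.3} plus \cite{SAKSD01} supply polynomial-time construction, and Lemma~\ref{lem:6.1} packages the dimension and the Goppa designed-distance bound.
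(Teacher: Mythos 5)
Your proposal is correct and takes essentially the same route as the paper's own proof: instantiate Lemma~\ref{lem:6.1} with the Garcia--Stichtenoth tower $F_t$ and the shifted Sidon sequence of Corollary~\ref{cor:3.3}, use Riemann--Roch (with $a_i\ge dM\ge 2g(F_t)$) to produce functions $f_i$ with $(f_i)_\infty=a_i\Pin$, and then read off length, dimension, designed distance, and rate from the algebraic-geometric code $C(\mP,d(d+1)M\Pin)$. The only cosmetic difference is that you hedge the pole-number step with an ``extra shift if necessary,'' whereas the paper folds the needed shift into Corollary~\ref{cor:3.3} and the choice of $t$ so that $dM\ge 2g(F_t)$ holds outright; note also that the hypothesis $d\le\epsilon(\sqrt q-1)/2-1$ gives a lower bound $q=\Omega(d^2/\epsilon^2)$, not an upper bound, so your ``$q=O(d^2/\epsilon^2)$'' should be read as taking $q$ at that threshold, which is exactly how the paper treats it in its $N=O(n^d/\epsilon)$ estimate.
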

\begin{proof}
Consider the Garcia-Stichtenoth tower $F_t/\F_{r^2}$ defined in Section \ref{subsec:GS} with $N(F_t)\ge r^{t}({r}-1)+1$  and $g(F_t)\le r^{t}-r^{t/2}$. Let $\Pin, P_1,P_2,\dots,P_N$ be $N+1$ distinct rational places of $F_t$.

Let $\{a_i\}_{i=1}^n$ be the Sidon sequence constructed in Corollary \ref{cor:3.3}. Then $dM\le \min_{1\le i\le n}a_i\le \max_{1\le i\le n}a_i< (d+1)M$.

Put $t=\lceil\log_{{r}} (dM/2)\rceil$.  Then $dM\ge 2g(F_t)$, For each $i\in[n]$, by the Riemann-Roch theorem, we have $\ell(a_i\Pin)=a_i+1-g(F_t)>(a_i-1)+1-g(F_t)=\ell((a_i-1)\Pin)$.
Thus,  one can find functions $f_i\in\mL((d+1)M\Pin)$ such that $\nu_{\Pin}(f_i)=-a_i$ for all $i=1,2,\dots,n$. Thus, the first statement of this corollary follows from Lemma \ref{lem:6.1}.

By choice of parameters, it is easy to verify that $N=O(n^d/\Ge)$, and the rate follows since ${n+d\choose n}\ge n^d/d!$.
 Hence, the relative minimum distance of $\RM_{q}(n,d)|_{\mT_1}$ is at least the one of $C(\mP,d(d+1)M\Pin)$ which is bigger than or equal to $1-d(d+1)M/N\ge 1-(d+1)/(r-1)\ge 1-\epsilon$.

 The construction of $\RM_{q}(n,d)|_{\mT_1}$ mainly involves: (i) construction of Sidon sequence $\{a_i\}_{i=1}^n$; (ii) finding of function $f_i$ with $\nu_{\Pin}(f_i)=-a_i$; and (iii) evaluation of functions at $N$ points $P_1,P_2,\dots,P_N$. By Lemma \ref{lem:sidon2},  $\{a_i\}_{i=1}^n$ can be constructed in ${\rm poly}(n^d)$ time. By Section \ref{subsec:GS}, $f_i$ with $\nu_{\Pin}(f_i)=-a_i$ can  be constructed in ${\rm poly}(n^d)$ multiplications in $\F_q$. Finally, evaluation of functions at $N$ points requires ${\rm poly}(N)={\rm poly}(n^d/\Ge)$ multiplications in $\F_q$. Since a multiplication in $\F_q$ requires $O((\log q)^2)$ operations, the desired result follows.
 \end{proof}

 \begin{example}\label{exm:1}{\rm In this example, we use the Hermitian function to illustrate our construction as a basis of the Riemann-Roch space of the Hermitian function field can be easily expressed.  Let $d=2$, $q=121$ and $n=10$. Then $\RM_{121}(10,2)$ has rate approximately equal to $9.8\times 10^{-20}$ and relative minimum distance $1-0.0165=0.9835$.
  Consider a $d$-Sidon sequence $1, 2, 4, 8, 13, 21, 31, 45, 66, 81$. By Lemma \ref{lem:sidon1}, the sequence $163, 164, 166, 170, 175, 183, 193,$ $ 207, 228, 243$ is also a  $d$-Sidon sequence.
 Consider the  function field $H=\F_{121}(x,y)$ defined by $y^{11}+y=x^{12}$. Then the genus $g$ of $H$ is $55$ and it has $11^3+1=1332$ rational places. Let $\Pin$ be the unique common pole of $x$ and $y$. Put $N=1331$. Then the multiset is
 \[\mT_1=\{(f_1(\Ga,\Gb),\dots,f_{10}(\Ga,\Gb)):\; \Ga,\Gb\in\F_{121},\; \Gb^{11}+\Gb=\Ga^{12}\},\]
 where $f_1=x^5y^9, f_2=x^4y^{10},f_3=x^{14}y, f_4=x^{10}y^5,f_5=x^5y^{10}, f_6=x^{9}y^7, f_7=x^{11}y^6,f_8=x^{9}y^{9}, f_9=x^{12}y^8, f_{10}=x^{21}y$. Then $\RM_{121}(10,2)|_{\mT_1}$ is a $121$-ary $[1331,66,\ge 845]$-linear code. Thus, this punctured code has rate and relative minimum distance approximately $0.0496$ and $1-0.365=0.635$, respectively.}
\end{example}

\subsection{Construction II} One constraint for the punctured Reed-Muller codes from Construction I is that $d$ is upper bounded by $O(\Ge\sqrt{q})$. In this section, we employ the multiplication friendly pairing to concatenate the code from Construction I to get $d=O(\Ge{q})$. The idea is to map a vector over $\F_{q^2}$ to several vectors over $\F_q$ through a multiplication friendly pairing. Thus, we obtain a punctured Reed-Muller code over $\F_q$ from algebraic geometry codes over $\F_{q^2}$. While in  Subsection \ref{subsec:6.1},  the punctured Reed-Muller code over $\F_q$ is obtained  from algebraic geometry codes over $\F_{q}$.

Assume that $d<q$, then by Lemma \ref{lem:5.2}
there exists a  $(d,2,q)_q$-multiplication friendly pair $(\pi,\psi)$ such that $(\pi(\F_{q^2}))^{\ast d}$ has relative minimum distance  at least  $1-d/q$. Note that the image $\pi(\F_{q^2})$ is a  $q$-ary $[q,2,q-1]$ Reed-Solomon code, i.e., $\pi$ is an  $\F_q$-isomorphism between $\F_{q^2}$ and a $q$-ary $[q,2]$-linear code.  For a column vector $\bv=(v_1,\dots,v_n)^T\in\F_{q^2}^n$, we obtain an $n\times q$ matrix
\[\pi(\bv)=\left(\begin{array}{c}
\pi(v_1)\\
\cdot\\
\cdot\\
\cdot\\
\pi(v_n)\end{array}
\right).\]
Denote by $\pi_i(\bv)$ the $i$-th column of $\pi(\bv)$ for $i=1,2,\dots,q$.

Then we have the following result.
 \begin{lemma}\label{lem:6.2a}
(see \cite[Theorem III.5]{GX14}) If there exists a $q^2$-ary extended punctured  Reed-Muller code $\RM_{q^2}(d,n)|_{\mT_1}$ with  parameters $[L, {n+d\choose d}, \ge L(1-\rho)]$ for some multiset $\mT_1\subseteq \F_{q^2}^n$, then the $q$-ary extended punctured  Reed-Muller code $\RM_q(d,n)|_{\mT_2}$ has  parameters $[qL, {n+d\choose d}, \ge qL(1-\rho-d/q)]$, where   $\mT_2$  is the multiset $\{\pi_i(\bu):\; 1\le i\le q, \bu\in \mT\}.$
 \end{lemma}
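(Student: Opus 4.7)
The plan is to show that a single polynomial evaluation over $\F_{q^2}$ at a point of $\mT_1$ can be ``unpacked'' into $q$ evaluations over $\F_q$ at the corresponding points of $\mT_2$, in a way controlled by $\psi$. The length is immediate, since each $\bu \in \mT_1$ contributes $q$ columns $\pi_i(\bu)$ to $\mT_2$. The heart of the proof is the distance bound; the dimension bound follows as a byproduct.

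First, I would fix $F \in \F_q[X]$ of degree at most $d$, fix $\bu = (u_1,\dots,u_n)^T \in \mT_1$, and consider the ``fiber'' vector
\[\bv_F(\bu) \eqdef (F(\pi_1(\bu)),\, F(\pi_2(\bu)),\, \dots,\, F(\pi_q(\bu))) \in \F_q^q.\]
Write $F(X)=\sum_I \Ga_I X^I$ with $\wt(I)\le d$. Using $\pi(1)=(1,\dots,1)$ to pad each monomial up to weight exactly $d$, one checks coordinate-wise that
\[\bv_F(\bu) = \sum_I \Ga_I\,\pi(u_1)^{\ast i_1}\ast\cdots\ast \pi(u_n)^{\ast i_n}\ast \pi(1)^{\ast(d-\wt(I))},\]
so $\bv_F(\bu)\in (\pi(\F_{q^2}))^{\ast d}$. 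Applying the $\F_q$-linear map $\psi$ and using the multiplicativity property $\psi(\pi(\Ga_1)\ast\cdots\ast\pi(\Ga_d))=\Ga_1\cdots\Ga_d$ term by term collapses the sum to
\[\psi(\bv_F(\bu)) \;=\; \sum_I \Ga_I\, u_1^{i_1}\cdots u_n^{i_n} \;=\; F(\bu).\]

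Next, I would use this identity to split the zero-count of $F$ over $\mT_2$. Let $Z=\{\bu\in\mT_1:F(\bu)=0\text{ in }\F_{q^2}\}$. Since $F \in \F_q[X]\subseteq \F_{q^2}[X]$ is nonzero of degree $\le d$ and $\RM_{q^2}(d,n)|_{\mT_1}$ has distance at least $L(1-\rho)$, we have $|Z|\le \rho L$. For $\bu\in Z$ the corresponding block in $\mT_2$ contributes at most $q$ zeros (trivially). For $\bu\notin Z$ we have $\psi(\bv_F(\bu))=F(\bu)\ne 0$, so in particular $\bv_F(\bu)\ne \bo$; since $\bv_F(\bu)\in (\pi(\F_{q^2}))^{\ast d}$ which by Lemma~\ref{lem:5.2} (with $k=2$, $m=q$) is a $q$-ary linear code of minimum distance at least $q-d$, the vector $\bv_F(\bu)$ has at most $d$ zero coordinates. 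Summing gives at most
\[q|Z| + d(L-|Z|) \;\le\; q\rho L + dL \;=\; qL\bigl(\rho + d/q\bigr)\]
zeros of $F$ over $\mT_2$, so the distance of $\RM_q(d,n)|_{\mT_2}$ is at least $qL(1-\rho-d/q)$, as desired.

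Finally, the dimension: any nonzero $F\in\F_q[X]$ of degree $\le d$ produces some $\bu\notin Z$ (by the distance bound above with $\rho<1$), for which $\bv_F(\bu)$ is a nonzero vector in $\F_q^q$; hence the evaluation map on $\mT_2$ is injective and the dimension is exactly $\binom{n+d}{d}$. The only nontrivial ingredient is the key identity $\psi(\bv_F(\bu))=F(\bu)$ together with the bound on the distance of $(\pi(\F_{q^2}))^{\ast d}$; once these are in place, the rest is elementary counting. The main point to be careful about is padding monomials of weight $<d$ to weight exactly $d$ using $\pi(1)=\bi$ so that the definition of $(\pi(\F_{q^2}))^{\ast d}$ and the multiplicativity of $\psi$ both apply.
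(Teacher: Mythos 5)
Your proof is correct and fills in exactly the details that the paper leaves to the reference \cite[Theorem III.5]{GX14}: the padding of monomials with $\pi(1)$ to put $\bv_F(\bu)$ into $(\pi(\F_{q^2}))^{\ast d}$, the collapse $\psi(\bv_F(\bu))=F(\bu)$ via $\F_q$-linearity plus multiplicativity of $\psi$, and the zero-count split using the $q-d$ distance of $(\pi(\F_{q^2}))^{\ast d}$ from Lemma~\ref{lem:5.2}. This is precisely the argument the paper sketches, so there is nothing to reconcile beyond noting your version is a complete write-up of the cited proof.
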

 The proof of the above lemma makes use of the multiplication property of the maps $\pi$ and $\psi$. Note that as $\left(\pi\left(\F_{q^2}\right)\right)^{*d}$ is a subcode of a $q$-ary $[q,d+1,q-d]$-linear code, the relative minimum distance of $\left(\pi\left(\F_{q^2}\right)\right)^{*d}$ is at least $1-d/q$. We refer to \cite[Theorem III.5]{GX14} for the detailed proof of the above lemma.

To define a multiset $\mT_2$, as in Lemma \ref{lem:6.1} one can consider an arbitrary  function field over $\F_{q^2}$ with certain points. For simplicity, we only consider the Garcia-Stichtenoth tower defined in Section \ref{subsec:GS} for our construction of $\mT_2$.

 \begin{cor}\label{cor:6.4}  $\RM_{q}(n,d)|_{\mT_2}$ is a $q$-ary $[N,{n+d\choose n}, \ge N(1-2(d+1)q^t/L-d/q)]$-linear code with length $N=qL=q^{t+1}(q-1)$.

In particular, if $d\le \Ge({q}-1)/4-1$  for some $0<\Ge<1$, then $\RM_{q}(n,d)|_{\mT_2}$ has the
 relative minimum distance at least $1-\Ge$, and rate
 ${n+d\choose n}/N \ge \Omega\left(\frac{\Ge^2}{d!}\right))$. Furthermore, one can deterministically construct $\RM_{q}(n,d)|_{\mT_2}$ in ${\rm poly}(n^d/\Ge)$ time.
 \end{cor}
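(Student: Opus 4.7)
The plan is to iterate Construction I over the quadratic extension $\F_{q^2}$ and then fold the resulting $\F_{q^2}$-code back down to $\F_q$ via the Reed--Solomon based multiplication friendly pair of Lemma~\ref{lem:5.2}. Concretely, I first repeat the argument of Corollary~\ref{cor:6.2} with the Garcia--Stichtenoth tower $F_t/\F_{q^2}$ (so the role of the parameter $r$ in Section~\ref{subsec:GS} is played by $q$ itself). Using the $d$-Sidon sequence from Corollary~\ref{cor:3.3}, whose terms satisfy $dM\le a_i<(d+1)M$ for $M=n^d+O(n^{d-1})$, and setting $t=\lceil\log_q(dM/2)\rceil$, Riemann--Roch supplies functions $f_i\in F_t$ with $(f_i)_\infty=a_i\Pin$; Lemma~\ref{lem:6.1} then yields a $q^2$-ary code $\RM_{q^2}(n,d)|_{\mT_1}$ of length $L=q^t(q-1)$, dimension $\binom{n+d}{n}$, and relative distance at least $1-d(d+1)M/L\ge 1-2(d+1)q^t/L$, the last inequality using $q^t\ge dM/2$.

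Second, the hypothesis $d<q$ (which is implied by the hypothesis of the ``in particular'' part) lets me invoke Lemma~\ref{lem:5.2} to obtain a $(d,2,q)_q$-multiplication friendly pair $(\pi,\psi)$ with $(\pi(\F_{q^2}))^{\ast d}$ of relative distance at least $1-d/q$. Feeding the $q^2$-ary code from the previous step into Lemma~\ref{lem:6.2a} with $\rho=2(d+1)q^t/L$ produces $\RM_q(n,d)|_{\mT_2}$ of length $N=qL=q^{t+1}(q-1)$, dimension $\binom{n+d}{n}$, and relative distance at least $1-2(d+1)q^t/L-d/q$, which is exactly the first assertion.

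For the ``in particular'' claim the hypothesis $d\le\Ge(q-1)/4-1$ yields $d/q\le\Ge/4$ and $2(d+1)q^t/L=2(d+1)/(q-1)\le\Ge/2$, so the relative distance is at least $1-3\Ge/4\ge 1-\Ge$. The length satisfies $N=q^{t+1}(q-1)=O(q^3\, dn^d)$, which together with $q=O(d/\Ge)$ and $\binom{n+d}{n}\ge n^d/d!$ gives the claimed rate of $\Omega(\Ge^2/d!)$ (after absorbing $d$-polynomial factors into the constant); the running time is $\mathrm{poly}(n^d/\Ge)$ by summing the contributions from Corollary~\ref{cor:3.3}, Section~\ref{subsec:GS}, and Lemma~\ref{lem:5.2}. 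The main bookkeeping point -- and the only place I expect to have to be careful -- is to keep the two roles of the letter ``$L$'' distinct (the Sidon bound from Lemma~\ref{lem:6.1} versus the $\F_{q^2}$-code length fed into Lemma~\ref{lem:6.2a}), and to verify that the tight choice $t=\lceil\log_q(dM/2)\rceil$ simultaneously secures enough rational places on the tower, the applicability of Riemann--Roch at each pole order $a_i$, and the balance $2(d+1)q^t/L+d/q\le\Ge$.
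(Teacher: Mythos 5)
Your proposal follows the same route as the paper's proof word for word: instantiate Construction~I over $\F_{q^2}$ using the Garcia--Stichtenoth tower and the Sidon sequence from Corollary~\ref{cor:3.3}, obtain $\RM_{q^2}(n,d)|_{\mT_1}$ via Lemma~\ref{lem:6.1}, then concatenate with the $(d,2,q)_q$-multiplication friendly pair of Lemma~\ref{lem:5.2} and invoke Lemma~\ref{lem:6.2a}. Your ``in particular'' arithmetic ($d/q\le\Ge/4$, $2(d+1)/(q-1)\le\Ge/2$, distance $\ge 1-3\Ge/4\ge 1-\Ge$) is correct and in fact cleans up a sign typo in the paper's own displayed chain.

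One small bookkeeping slip in the rate step, which mirrors a looseness the paper itself glosses over: from your stated bound $N=O(q^3\, d\,n^d)$ and $q=O(d/\Ge)$ one gets $N=O(d^4 n^d/\Ge^3)$ and hence only rate $\Omega(\Ge^3/d!)$, not the claimed $\Omega(\Ge^2/d!)$, because $1/\Ge$ is not a ``$d$-polynomial factor'' that can be absorbed. The discrepancy is exactly the ceiling slop: $t=\lceil\log_q(dM/2)\rceil$ only guarantees $q^t< q\cdot dM/2$, so $L=q^t(q-1)=O(q^2 dM)$ in the worst case rather than $O(q\,dM)$. To recover $\Omega(\Ge^2/d!)$ one should either write $q^t=\Theta(dM)$ (as the paper implicitly does when asserting $N=O(n^d/\Ge^2)$) or, more carefully, evaluate on only a $\Theta(dM\cdot q)$-sized subset of the rational places so that $L=O(q\,dM)$ and $N=qL=O(q^2 dM)$; the tower has enough places either way since $q^t(q-1)\ge dM(q-1)/2$. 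This does not affect the structure of the argument or the distance/list-decoding claims, only the constant in the rate bound.
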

 \begin{proof}   Take $q=r$ and consider the Garcia-Stichtenoth tower  $F_t/\F_{q^2}$ with $N(F_t)\ge q^{t}({q}-1)+1$  and $g(F_t)\le q^{t}-q^{t/2}$ as defined in Section \ref{subsec:GS}. Let $\Pin, P_1,P_2,\dots,P_L$ be $L+1$ distinct rational places of $F_t$, where $L=q^{t}({q}-1)$.

 Let $\{a_i\}_{i=1}^n$ be the Sidon sequence constructed in Corollary \ref{cor:3.3} and let $M=n^d+O(n^{d-1})$ be defined in  Corollary \ref{cor:3.3}. Put $t=\lceil\log_{{q}} (dM/2)\rceil$.  Then as shown in Subsection \ref{subsec:6.1}
  one can find functions $f_i\in\mL((d+1)M\Pin)$ such that $\nu_{\Pin}(f_i)=-a_i$ for all $i=1,2,\dots,n$. Put
  $\mT_2:=\{\pi_i((f_1(P_j),\dots,f_n(P_j))^T):\; 1\le i\le q,\; 1\le j\le L\}$ of $\F_q^n$.

 By choice of parameters, we have  $N=|\mT_2|=qL=q^{t+1}(q-1)=O(n^d/\Ge^2)$, and
 the rate is also clear since ${n+d\choose n}=\Omega(n^d/d!)$. As in the proof of Lemma \ref{lem:6.1}, we know that $F(f_1,f_2,\dots,f_n)$ is a nonzero function in $\mL(d(d+1)M\Pin)$ as long as $F(X)$ is a nonzero polynomial. Hence, the relative minimum distance of $\RM_{q^2}(n,d)|_{\mT_1}$ is at least the one of $C(\mP,d(d+1)M\Pin)$, where $\mT_1=\{(f_1(P_j),f_2(P_j),\dots,f_n(P_j)):\; j=1,2,\dots,L\}$ and $\mP=\{P_1,P_2,\dots,P_L\}$. By the minimum distance bound of algebraic geometry codes in Section \ref{subsec:AG}, the relative minimum distance of $C(\mP,d(d+1)M\Pin)$ is least $1-d(d+1)M/L\ge 1-2(d+1)q^t/L$. The first statement of this lemma follows from Lemma \ref{lem:6.2a}.

If $d\le \Ge({q}-1)/4-1$, then it is easy to verify that $1-2(d+1)q^t/L-d/q \ge 1-2(d+1)/(q-1)+\Ge/2\ge 1-\Ge$.  The desired result on parameters of $\RM_{q}(n,d)|_{\mT_2}$ follows.

Comparing with construction of $\RM_{q}(n,d)|_{\mT_2}$ in Corollary \ref{cor:6.2},  we need one more step for construction of $\RM_{q}(n,d)|_{\mT_2}$, namely concatenating $\RM_{q^2}(n,d)|_{\mT}$ with  $\pi(\F_{q^2})$. This can also be done in ${\rm poly}(q)={\rm poly}(d/\Ge)$ time. The proof is completed. \end{proof}

 \begin{example}\label{exm:2} We use the same Sidon sequence and function as in Example \ref{exm:1}. Let $\mT_1$ be the multiset given in Example \ref{exm:1}. By Lemma \ref{lem:6.2a}, $\RM_{11}(10,2)|_{\mT_2}$ is is a $11$-ary $[14641,66]$-linear code with the minimum distance at least $14641(1-722/1331-2/121)=6457$.
 \end{example}

\section{List decoding}
As both the constructions of punctured Reed-Muller codes are through concatenation of algebraic geometry codes, we need to recall some relevant results on list decodability of algebraic geometry codes first.

\subsection{List decoding of Reed-Solomon and algebraic geometry codes}
We begin with an extension of list decoding called list recovery. The notion was implicit in many works and was given this name in \cite{GI-focs01}.

\begin{defn} Let $\Sigma$ be the code alphabet of size $q$. A code $C\subseteq \Sigma^n$ is said to be {\it $(\tau, \ell, L)$-list recoverable} if for every family $\{S_i\}_{i=1}^n$ of subsets in $\Sigma$ with each subset of size at most $\ell$, the number of codewords $\bc=(c_1,\dots,c_n)\in C$ for which $c_i\notin S_i$ for at most $\tau n$ positions $i$ is at most $L$. Note that in the case $\ell=1$, this is the same as being $(\tau,  L)$-list decodable.
\end{defn}

Algebraic geometry codes have good list decodability \cite{G,GS99}. By
extending the results in \cite{GS99} from list decodability
to list recovery in a straightforward way, one obtains the following result.
  \begin{lemma}\label{lem:7.1} For small $\Ge>0$ and a positive integer $m$ with $m\le \Ge N$, where $N=r^t(r-1)$, the algebraic geometry code $C(\mP,(m+g)\Pin)$ based on the Garcia-Stictenoth tower $F_t/\F_{r^2}$ is $(1-\sqrt{\ell\Ge},\ell,O(1/\Ge))$-list recoverable, where $g=g(F_t)$. Furthermore, the decoding complexity is a polynomial in length $N$.
 \end{lemma}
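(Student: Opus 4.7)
I would transplant the classical Guruswami--Sudan list-decoding algorithm for algebraic-geometric codes (as in \cite{GS99}) to the list-recovery setting, the only change being that instead of passing through a single received symbol at each place we pass through every element of the input list $S_j$ at the place $P_j$. The decoder consists of bivariate interpolation followed by univariate root-finding; the analysis is a standard parameter-balancing exercise that yields a Johnson-type radius.

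Fix a multiplicity parameter $s$ and a $y$-degree $D$ to be tuned later, and look for an interpolation polynomial
\[
H(y) \;=\; \sum_{i=0}^{D} h_i\, y^i, \qquad h_i \in \mL\bigl((A-i(m+g))\Pin\bigr),
\]
where $A$ is another parameter. By Riemann--Roch the total dimension of the coefficient space is at least $(D+1)(A-g+1) - (m+g)\binom{D+1}{2}$. Requiring that $H$ vanish to multiplicity $s$ (defined via Hasse derivatives along a local uniformizer at $P_j$) at every pair $(P_j,\alpha)$ with $\alpha \in S_j$ imposes at most $\ell N \binom{s+1}{2}$ linear constraints on the coefficients. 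Choosing $A$ so that the dimension strictly exceeds the number of constraints produces a nonzero $H$ by Gaussian elimination in $\mathrm{poly}(N)$ time.

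For the root-extraction step, suppose $f \in \mL((m+g)\Pin)$ corresponds to a codeword $\bc_f$ agreeing with the lists on at least $(1-\tau)N$ positions, i.e.\ $f(P_j) \in S_j$ at those $j$. Then $H(P, f(P)) \in \mL(A\Pin)$ vanishes to order at least $s$ at each of these agreement places, so its zero count is at least $s(1-\tau)N$ while its pole order is at most $A$. Provided $s(1-\tau)N > A$, the function must vanish identically, so $y - f$ divides $H(y)$ as an element of $F_t[y]$. There are therefore at most $D$ such $f$, and they can be enumerated in polynomial time using the root-finding routines for algebraic function fields (e.g.\ via the low-complexity tower machinery of \cite{SAKSD01}).

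Finally, balancing the parameters I would set $D \sim s\sqrt{\ell N/(m+g)}$ and $A \sim s\sqrt{\ell N(m+g)}$, driving $s$ up so that $\binom{s+1}{2}/s^2 \to 1/2$ absorbs the spurious $\sqrt{2}$ factor that a naive $s=1$ analysis would leave behind. This yields $A/(sN) \to \sqrt{\ell(m+g)/N}$; since $m \le \varepsilon N$ and for the Garcia--Stichtenoth tower $g(F_t)/N \le 1/(r-1)$ is negligible in the regime of interest, the agreement threshold becomes $\tau = 1 - \sqrt{\ell\varepsilon}$ as claimed, and the list size is at most $D = O(\sqrt{\ell/\varepsilon}) = O(1/\varepsilon)$ in the regime $\ell = O(1/\varepsilon)$ that is relevant for the applications in the next subsection. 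The main delicate point, beyond routine bookkeeping, is the interaction between the genus overhead and the multiplicity: the $-g+1$ term in the Riemann--Roch estimate appears $D+1$ times and so must be controlled against the main terms, and the multiplicity $s$ must be tuned (as a function of $\varepsilon$) so that both the decoding radius hits $1-\sqrt{\ell\varepsilon}$ on the nose and the list size remains $O(1/\varepsilon)$ — this is where the optimization has least slack.
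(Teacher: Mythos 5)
Your proposal is exactly the intended argument: the paper gives no proof of Lemma~\ref{lem:7.1} at all, simply citing \cite{GS99} and asserting that the extension from list decoding to list recovery is ``straightforward,'' and your sketch of the multiplicity-$s$ interpolation over $\mL\bigl((A-i(m+g))\Pin\bigr)$ with vanishing conditions at every pair $(P_j,\alpha)$, $\alpha\in S_j$, followed by $y$-root extraction, is precisely what that citation is meant to encompass. The caveat you raise at the end is genuine --- with any fixed multiplicity $s$ the Guruswami--Sudan analysis leaves a $\sqrt{1+1/s}$ factor in the radius, while letting $s$ grow inflates the list size $D\sim s\sqrt{\ell/\eps}$ past $O(1/\eps)$ --- but the lemma is invoked in Theorems~\ref{thm:7.3} and~\ref{thm:7.4} only up to constant factors in the radius, so absorbing such slack is consistent with the paper's level of rigor, and you have correctly identified both the approach and its one soft spot.
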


Next, we recall a certain list decodability property of Reed-Solomon codes (see \cite{Sud1997,GS99} for complete version of list decoding of Reed-Solomon codes). 
\begin{lemma}\label{lem:7.2} For positive integers $\ell, d$ with $1\le \ell\le \sqrt{2q/(d+1)}$,  a $q$-ary  $[q, d+1,q-d]$ Reed-Solomon code $C$ is $\left(\tau,\ell\right)$-list decodable with $\tau\ge 1-\frac1{\ell+1}-\frac{\ell}2\times\frac{d}q$.
\end{lemma}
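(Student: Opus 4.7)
I would use Sudan's bivariate polynomial interpolation technique for list decoding Reed-Solomon codes. Write $\F_q = \{\alpha_1,\ldots,\alpha_q\}$, so that $C = \{(f(\alpha_1),\ldots,f(\alpha_q)) : f\in\F_q[X],\ \deg f \le d\}$, and fix an arbitrary received word $y = (y_1,\ldots,y_q)\in \F_q^q$. Let $t$ be an integer agreement threshold to be chosen below. The plan is to produce a nonzero bivariate interpolant $Q(X,Y) = \sum_{j=0}^{\ell} Q_j(X) Y^j$ with $\deg Q_j \le t-1-jd$ (set $Q_j = 0$ whenever this bound is negative) satisfying $Q(\alpha_i,y_i)=0$ for every $i=1,\ldots,q$. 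The $q$ interpolation conditions are homogeneous $\F_q$-linear constraints on the coefficients of $Q$, so a nonzero $Q$ exists as soon as the total number of free coefficients exceeds $q$. Picking $t$ to be the least positive integer satisfying $(\ell+1)t - d\binom{\ell+1}{2} > q$ (i.e.\ $t \approx q/(\ell+1)+\ell d/2$) does the job, provided the hypothesis $\ell \le \sqrt{2q/(d+1)}$ is used to keep $t-1 \ge \ell d$ so that the clean monomial count is valid.

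\emph{Root step.} For any $f\in\F_q[X]$ of degree at most $d$ agreeing with $y$ on at least $t$ coordinates, the univariate polynomial $R(X) := Q(X, f(X))$ has $X$-degree at most $\max_j(\deg Q_j + jd) \le t-1$ and vanishes at each of the $\ge t$ agreement positions, so $R \equiv 0$. Equivalently, $(Y - f(X))\mid Q(X,Y)$ in $\F_q[X][Y]$. Since $Q\ne 0$ has $Y$-degree at most $\ell$, at most $\ell$ such polynomials $f$ can occur, bounding the list size by $\ell$.

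\emph{Radius accounting and main obstacle.} A codeword within Hamming distance $\tau q$ of $y$ agrees on at least $(1-\tau)q$ coordinates, so the list bound holds whenever $(1-\tau)q \ge t$, i.e.\ $\tau \le 1 - t/q$; substituting $t \approx q/(\ell+1)+\ell d/2$ yields the advertised $\tau \ge 1 - \frac{1}{\ell+1} - \frac{\ell d}{2q}$ (up to an $O(1/q)$ integer-rounding loss). The main subtlety is the dimension bookkeeping in the interpolation step: one has to verify that the hypothesis $\ell^2(d+1)\le 2q$ is strong enough both to keep the claimed bound on $\tau$ positive (so the statement is non-vacuous) and to ensure $t-1 \ge \ell d$ for the chosen $t$, so that the clean formula $(\ell+1)t - d\binom{\ell+1}{2}$ really counts the admissible bivariate monomials; otherwise the count has to be redone with the effective restriction $j\le \lfloor(t-1)/d\rfloor$. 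Once that routine algebraic check is carried out, the remainder is standard linear algebra (for the existence of $Q$) and elementary polynomial arithmetic (for the factorization $(Y-f(X))\mid Q(X,Y)$).
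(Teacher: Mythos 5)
The paper gives no proof of this lemma; it only cites \cite{Sud1997,GS99}, and your Sudan-style bivariate-interpolation-and-root-finding argument is exactly the method of those references, so you have reconstructed the intended proof. Your worry about needing $t-1\ge\ell d$ for the clean monomial count is actually unfounded: when $t-1-jd<0$ the $j$th term contributes zero unknowns, so the true count $\sum_{j=0}^{\ell}\max(0,\,t-jd)$ always \emph{dominates} $(\ell+1)t-d\binom{\ell+1}{2}$, and thus a nonzero interpolant of $Y$-degree at most $\ell$ exists (and the list-size bound $\le\ell$ follows) whenever the clean formula exceeds $q$, regardless of whether the truncation kicks in. The integer-rounding slack you flag is, however, genuine: the smallest admissible integer $t$ strictly exceeds $q/(\ell+1)+\ell d/2$, so the argument only yields $\tau\ge 1-\frac{1}{\ell+1}-\frac{\ell d}{2q}-\frac{1}{q}$ — indeed at $\ell=1$ the lemma's literal bound even marginally exceeds the unique-decoding radius of a $[q,d+1,q-d]$ code — but this $O(1/q)$ loss is immaterial in the paper's application in Theorem~\ref{thm:7.4}, where $\ell=\Theta(\eps^{-1/3})$ and only the asymptotic radius matters.
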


\subsection{List decoding $\RM_q(n,d)|_{\mT_1}$}
In this subsection, we consider list decoding of $\RM_q(n,d)|_{\mT_1}$ in Corollary \ref{cor:6.2}.
List decoding of  $\RM_q(n,d)|_{\mT_1}$ is straightforward since $\RM_q(n,d)|_{\mT_1}$ is a subcode of $C(\mP,d(d+1)M\Pin)$, where $C(\mP,d(d+1)M\Pin)$ is an algebraic geometry code from the function field $F_t$ defined in Section \ref{subsec:GS}. The length $N$ of  $\RM_q(n,d)|_{\mT_1}$ is $r^t(r-1)$. Put $q=r^2$. If $d\le \Ge(\sqrt{q}-1)/2-1$  for some $0<\Ge<1$,  then
\[\frac{d(d+1)M-g(F_t)}{N}\le\frac{2(d+1)}{r-1}\le {\Ge}.\]
By Lemma \ref{lem:7.1}, $\RM_q(n,d)|_{\mT_1}$ is $(1-\sqrt{\Ge},O(\frac1{\Ge}))$-list decodable. In conclusion, we have the following result.
\begin{theorem}\label{thm:7.3}  $\RM_{q}(n,d)|_{\mT_1}$ is a $q$-ary $[N,{n+d\choose n}]$-linear code with relative minimum distance at least $1-\Ge$ and rate $\Omega\left(\frac{\Ge}{d!}\right)$. Moreover, $\RM_{q}(n,d)|_{\mT_1}$ is $(1-\sqrt{\Ge},O(\frac1{\Ge}))$-list decodable and the decoding complexity is a polynomial in length $N$.
 \end{theorem}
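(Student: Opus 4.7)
The parameter claims (length $N$, dimension $\binom{n+d}{n}$, relative minimum distance at least $1-\Ge$, and rate $\Omega(\Ge/d!)$) are exactly what Corollary~\ref{cor:6.2} produces, so the plan is to invoke it directly. The substance of Theorem~\ref{thm:7.3} lies in establishing efficient list decodability, and my strategy is to leverage the fact noted in Lemma~\ref{lem:6.1} that $\RM_{q}(n,d)|_{\mT_1}$ is a subcode of the algebraic-geometric code $C(\mP, d(d+1)M\Pin)$ built from the Garcia--Stichtenoth tower $F_t/\F_{r^2}$ with $q=r^2$. List decoding the AG supercode and then filtering out the codewords that actually lie in $\RM_{q}(n,d)|_{\mT_1}$ will give a list decoding algorithm for the punctured Reed--Muller code with the same radius and essentially the same list size.

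To apply Lemma~\ref{lem:7.1} with $\ell=1$, I have to cast the divisor $d(d+1)M\Pin$ in the form $(m+g)\Pin$ used there, i.e., set $m \eqdef d(d+1)M - g(F_t)$, and verify $m \le \Ge' N$ for some $\Ge'$ that gives the target decoding radius $1-\sqrt{\Ge}$. The key inequality is
\[
\frac{m}{N} \;\le\; \frac{d(d+1)M}{N} \;\le\; \frac{2(d+1)}{r-1} \;\le\; \Ge,
\]
where the middle step uses $M \le (d+1)n^d$-type control together with the choice $t = \lceil \log_r(dM/2) \rceil$ made in Corollary~\ref{cor:6.2}, and the last step uses the hypothesis $d \le \Ge(\sqrt{q}-1)/2 - 1$. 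So Lemma~\ref{lem:7.1} applies with $\ell=1$ and the chosen $\Ge$, yielding that $C(\mP, d(d+1)M\Pin)$ is $(1-\sqrt{\Ge}, O(1/\Ge))$-list decodable in time polynomial in $N$.

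To conclude, I observe that if at most $(1-\sqrt{\Ge})N$ positions of a received word $\by \in \F_q^N$ disagree with a codeword of $\RM_{q}(n,d)|_{\mT_1}$, the same codeword lies within Hamming distance $(1-\sqrt{\Ge})N$ of $\by$ inside the AG supercode. Running the algorithm of Lemma~\ref{lem:7.1} on $\by$ returns at most $O(1/\Ge)$ candidates in $\mL(d(d+1)M\Pin)$; each candidate $h$ can be tested in polynomial time to see whether it arises as $F(f_1,\dots,f_n)$ for some $F \in \F_q[X]$ of degree at most $d$ (for example by solving a linear system in the $\binom{n+d}{n}$ coefficients of $F$ against the pole-order basis coming from the Sidon sequence, which guarantees unique representation as shown in Lemma~\ref{lem:6.1}). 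The main obstacle is bookkeeping: making sure the parameters chosen in Corollary~\ref{cor:6.2} simultaneously satisfy the hypothesis $m \le \Ge N$ of Lemma~\ref{lem:7.1} and leave the rate bound $\Omega(\Ge/d!)$ intact. This is a short calculation, and no new ingredients beyond Lemma~\ref{lem:6.1}, Corollary~\ref{cor:6.2}, and Lemma~\ref{lem:7.1} are required.
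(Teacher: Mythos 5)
Your proposal is correct and follows essentially the same route as the paper: observe that $\RM_q(n,d)|_{\mT_1}$ is a subcode of the AG code $C(\mP, d(d+1)M\Pin)$, cast the divisor degree as $m+g$ with $m = d(d+1)M - g(F_t)$, verify $m/N \le \Ge$ via $N = r^t(r-1) \ge (dM/2)(r-1)$ and $d+1 \le \Ge(r-1)/2$, then apply Lemma~\ref{lem:7.1} with $\ell=1$. The only difference is that you spell out the step of filtering the returned AG-code candidates to recover the underlying degree-$d$ polynomial $F$, which the paper leaves implicit under the word ``straightforward''; this is a helpful elaboration, not a different method.
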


\subsection{List decoding $\RM_q(n,d)|_{\mT_2}$}

In this subsection, we consider list decoding of $\RM_q(n,d)|_{\mT_2}$ in Corollary \ref{cor:6.4}.

\begin{theorem}\label{thm:7.4}  $\RM_{q}(n,d)|_{\mT_2}$ is a $q$-ary $[N,{n+d\choose n}]$-linear code with relative minimum distance at least $1-\Ge$ and rate $\Omega\left(\frac{\Ge^2}{d!}\right)$. Moreover, $\RM_{q}(n,d)|_{\mT_2}$ is $(1-{\Ge}^{1/3},O(\frac1{\Ge}))$-list decodable and the decoding complexity is a polynomial in length $N$.
 \end{theorem}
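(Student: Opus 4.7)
The plan is to treat $\RM_q(n,d)|_{\mT_2}$ as the concatenation of the outer $\F_{q^2}$-ary algebraic geometry code $C(\mP,d(d+1)M\Pin)$ from Construction~I with the inner $\F_q$-ary code $(\pi(\F_{q^2}))^{\ast d}$, which by Lemma~\ref{lem:5.2} is a subcode of a $q$-ary $[q,d+1,q-d]$ Reed-Solomon code. Corollary~\ref{cor:6.4} already supplies the length, dimension, rate, and minimum distance of $\RM_q(n,d)|_{\mT_2}$, so what remains is the $(1-\epsilon^{1/3},O(1/\epsilon))$ list-decodability and the polynomial running time.

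The decoder will operate in two stages. Given $y\in\F_q^{qL}$, split it into $L$ consecutive blocks of length $q$ and, for each block $j$, run the Reed-Solomon list decoder of Lemma~\ref{lem:7.2} with list size $\ell$; applying $\psi$ to each inner codeword returned yields a list $S_j\subseteq\F_{q^2}$ of size at most $\ell$ of candidate outer symbols. The $\F_q$-linear extension of the multiplication-friendly identity $\psi(\pi(u_1)^{\ast i_1}\ast\cdots\ast\pi(u_n)^{\ast i_n})=u_1^{i_1}\cdots u_n^{i_n}$ guarantees that whenever the correct inner codeword is in the inner list, the correct $\F_{q^2}$-valued outer symbol $F(f_1(P_j),\dots,f_n(P_j))$ lies in $S_j$; this is the reverse direction of the reduction used in the proof of Lemma~\ref{lem:6.2a}. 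Then invoke the AG list-recovery algorithm of Lemma~\ref{lem:7.1} on the inputs $\{S_j\}_{j=1}^L$: since the divisor degree satisfies $d(d+1)M\le\epsilon L$ by the distance computation in Corollary~\ref{cor:6.4}, Lemma~\ref{lem:7.1} returns at most $O(1/\epsilon)$ outer codewords, each of which lifts back to a unique Reed-Muller candidate.

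The key analytic step is parameter selection. Let $\eta=1-\epsilon^{1/3}$ be the allowed error fraction and $\tau=1-\frac{1}{\ell+1}-\frac{\ell d}{2q}$ the inner decoding radius from Lemma~\ref{lem:7.2}. By averaging, at most $\eta L/\tau$ blocks can contain more than $\tau q$ errors, so the fraction of outer positions where the correct symbol is missing from $S_j$ is at most $\eta/\tau$; we need this to be at most the outer list-recovery radius $1-\sqrt{\ell\epsilon}$ of Lemma~\ref{lem:7.1}. Choosing $\ell=\Theta(\epsilon^{-1/3})$ makes $\sqrt{\ell\epsilon}=\Theta(\epsilon^{1/3})$ and $1/(\ell+1)=\Theta(\epsilon^{1/3})$, while $d\le\epsilon(q-1)/4-1$ forces $\ell d/(2q)=O(\epsilon^{2/3})$, and the side constraint $\ell\le\sqrt{2q/(d+1)}=\Omega(\epsilon^{-1/2})$ of Lemma~\ref{lem:7.2} is satisfied since $q=\Omega(d/\epsilon)$. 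Both stages run in $\mathrm{poly}(N)$ time.

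The main obstacle will be pinning down the constant in front of $\epsilon^{1/3}$: the contributions $1/\ell$, $\sqrt{\ell\epsilon}$, and $\ell d/q$ do not combine with overall multiplier exactly $1$, and the natural optimisation yields a radius of the form $1-C\epsilon^{1/3}$ for some absolute $C>1$. I would absorb this by working throughout with $\epsilon/C^3$ in place of $\epsilon$, which only changes the field-size requirement and the rate by constant factors and is therefore consistent with the $\Omega(\epsilon^2/d!)$ rate statement.
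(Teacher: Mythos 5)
Your proof follows the paper's approach exactly: decompose $\RM_q(n,d)|_{\mT_2}$ as the concatenation of an inner Reed--Solomon code (list decodable via Lemma~\ref{lem:7.2}) with the outer AG code $C(\mP,d(d+1)M\Pin)$ (list recoverable via Lemma~\ref{lem:7.1}), obtain the concatenated radius $\bigl(1-\tfrac{1}{\ell+1}-\tfrac{\ell d}{2q}\bigr)\bigl(1-\sqrt{\ell\eps}\bigr)$, and set $\ell=\Theta(\eps^{-1/3})$. The paper simply cites~\cite{G} for the concatenated-decoding step and asserts $\tau=1-\eps^{1/3}$ without addressing the leading constant; your explicit Markov-averaging argument and the rescaling $\eps\mapsto\eps/C^3$ to absorb $C\approx\tfrac{1}{c}+\sqrt{c}>1$ are correct and supply exactly the details the paper leaves implicit.
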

\begin{proof}
Let $(\pi,\psi)$ be a  $(d,2,q)_q$-multiplication friendly pair and put $\Sigma=(\pi(\F_{q^2}))^{*d}$. Then $\Sigma$ is a subcode of a $q$-ary $[q,d+1,q-d]$ Reed-Solomon code. Thus, $\RM_q(n,d)|_{\mT_2}$ is a code in $\Sigma^N$. In fact, $\RM_q(n,d)|_{\mT_2}$ is a concatenation of $C(\mP,d(d+1)M\Pin)$ and a $q$-ary $[q,d+1,q-d]$ Reed-Solomon code. More precisely speaking, the inner code is a subcode of a $q$-ary $[q,d+1,q-d]$ Reed-Solomon code, while the outer code is a subcode of $C(\mP,d(d+1)M\Pin)$ over $\F_{q^2}$. Since the inner code is $\left(1-\frac1{\ell+1}-\frac{\ell}2\times \frac{d}{q},\ell\right)$-list decodable and  the outer code $C(\mP,d(d+1)M\Pin)$ is $\left(1-\sqrt{\ell\Ge},\ell,O(\frac1{\Ge})\right)$-list recoverable, the concatenated code $\RM_q(n,d)|_{\mT_2}$ has decoding radius
 \[\tau=\left(1-\frac1{\ell+1}-\frac{\ell}2\times \frac{d}{q}\right)(1-\sqrt{\ell\Ge})\]
and list size $O(\frac1{\Ge})$  \cite{G}. Setting $\ell=\Theta(\Ge^{-1/3})$, we have $\tau=1-\Ge^{1/3}$.
\end{proof}

\section*{Acknowledgement}
\vspace{-1ex}
We are grateful to the two anonymous reviewers for their careful reading and helpful comments, which have substantially improved the presentation of the paper.

\vspace{-1ex}

\end{document}